\documentclass[twocolumn,twoside]{IEEEtran}
\usepackage{times} 
\usepackage{epsfig}
\usepackage{xtab,diagbox}
\usepackage{arydshln}
\usepackage{graphicx}
\usepackage{epstopdf}
\usepackage{color,subfigure, latexsym, amsmath, amsfonts, amssymb,cite,mathtools}
\usepackage{algorithm}
\usepackage{algorithmic}
\usepackage{epsfig}
\usepackage{graphicx}
\usepackage{epstopdf}
\usepackage{enumitem}
\usepackage{bm,amsmath,amssymb,amsfonts,graphicx,epsfig,amsthm,color, xcolor}
\usepackage{bbold,dsfont}
\usepackage{float}
\usepackage[hyphens]{url}
\PassOptionsToPackage{bookmarks=false}{hyperref}
\usepackage{mathrsfs}
\usepackage[depth=-1]{bookmark}
\usepackage{booktabs}       
\usepackage{amsfonts}       
\usepackage{microtype}      
\usepackage{placeins}
\usepackage{times}
\usepackage{graphicx} 

\usepackage{wrapfig}



\usepackage{accents}
\newcommand{\R}{\mathbb{R}}

\newcommand{\K}{\mathcal{K}}

\newcommand{\Q}{\mathcal{Q}}
 
\newcommand{\C}{\mathbb{C}}

\newtheorem{lemma}{Lemma}

\newtheorem{theorem}{Theorem}
\newtheorem{definition}{Definition}
\newtheorem{assumption}{Assumption}
\newtheorem{problem}{Problem}

\theoremstyle{remark}\newtheorem{remark}{Remark}

\allowdisplaybreaks

\title{\LARGE \bf Data-Driven Frequency-Selective Output Regulation of Nonlinear Systems under Almost Periodic Exosignals
}

\author{ Yifei Li, Wenjie~Liu, Gang Wang,~\IEEEmembership{Senior Member,~IEEE}, and Lihua Xie,~\IEEEmembership{Fellow,~IEEE}
		 \thanks{Yifei Li and Wenjie Liu are with the State Key Lab of Autonomous Intelligent Unmanned Systems, Beijing Institute of Technology, Beijing 100081, China, and also also with the Centre for Advanced Robotics Technology  Innovation (CARTIN), School of Electrical and Electronic Engineering, Nanyang Technological University, Singapore (e-mail: li.yifei@ntu.edu.sg; wenjie.liu@ntu.edu.sg).}
         \thanks{Gang Wang is with the State Key Lab of Autonomous Intelligent Unmanned Systems, Beijing Institute of Technology, Beijing 100081, China (e-mail: gangwang@bit.edu.cn).}
        \thanks{Lihua Xie is with the Centre for Advanced Robotics Technology Innovation (CARTIN), School of Electrical and Electronic Engineering, Nanyang Technological University, Singapore (e-mail: elhxie@ntu.edu.sg).

			}
	}

\begin{document}
	
	\maketitle
	

    \allowdisplaybreaks
	
	\begin{abstract}
This paper studies output regulation for a class of unknown continuous-time nonlinear systems driven by almost periodic exosignals. The plant dynamics are assumed to be linearly parameterized over a prescribed nonlinear dictionary, while all coefficient matrices in the plant, input channel, output map, and exosignal channel are unknown. Since the plant model is unavailable, exact nonlinear output regulation would generally require model identification followed by the solution of nonlinear regulator equations. To avoid these steps, we pursue a frequency-selective regulation objective: the steady-state regulation error is allowed to be almost periodic, but its Fourier--Bohr coefficients at prescribed exosystem frequencies are guaranteed to vanish, and the residual error energy is explicitly bounded. To this end, a $p$-copy internal model is embedded in a dynamic controller, yielding an augmented nonlinear system whose unknown constant matrices are represented directly by measured data. A noise-robust semidefinite program is derived to synthesize the controller gain without model identification and without measuring the exosignal amplitudes or phases. The resulting closed-loop vector field is made exponentially contractive on a prescribed operating set, which implies the existence and uniqueness of a bounded and attracting trajectory. By combining contraction theory with Fourier--Bohr analysis, we prove that this steady-state trajectory is almost periodic, that the embedded-frequency components of the regulation error are eliminated, and that the unmodeled spectral components satisfy a Parseval-type time-averaged energy bound. Numerical and physics-based simulations on a quadrotor with a cable-suspended payload illustrate the effectiveness of the proposed data-driven internal-model design.
\end{abstract}

	\begin{keywords} 
   Data-driven control, nonlinear output regulation, almost periodic functions, internal model principle, contraction analysis
	\end{keywords}

\section{Introduction}
Output regulation is a central problem in control theory. Its objective is to design a feedback controller such that the regulated output tracks reference signals and rejects disturbances generated by an autonomous exosystem. For linear systems, the regulator equations and the internal model principle provide complete and elegant solutions. For nonlinear systems, however, the corresponding nonlinear output regulator equations (NOREs) are generally difficult to solve, even when a precise system model is available. The difficulty becomes more pronounced in data-driven settings, where the system matrices, input map, disturbance channel, and output map are not known \emph{a priori}.

A large body of nonlinear output regulation theory has been developed for model-based systems; see, e.g., \cite{huang2004nonlinear,Huang2004general,Khan2020,Farnaz2016}. These results provide strong asymptotic regulation guarantees, but they usually require explicit plant models and the solvability of the NOREs. For practical systems such as aerial vehicles, robotic platforms, and power converters, accurate parametric models are often unavailable, and the dominant disturbances may contain several incommensurate oscillatory components. This motivates regulation formulations that are both data-driven and compatible with almost periodic exogenous signals.

Data-driven control has recently become a powerful alternative to model-identification-based synthesis. In the linear case, behavioral methods based on Willems et al's fundamental lemma \cite{willems2005note} have led to data-enabled predictive control, robust control, aperiodic control, and output regulation designs directly from measured trajectories \cite{35Persis2020,Waarde2022,you2023auto,Eising2025,Rueda2022,Coulson2019data,Wildhagen2023,Qi2023data,berberich2019data,Zhu2024,wang2026data,Chang2024localized}. Reinforcement-learning and adaptive-dynamic-programming methods have also been used for optimal output regulation \cite{Lewis2009,Lewis2012,Gao2016,Wu2022,lin2024ddimp}. For nonlinear systems, existing data-driven regulation results remain limited. Harmonic regulation methods eliminate selected Fourier components under periodic exosignals \cite{hu2024output}; incremental passivity and local-linearization approaches provide alternative data-driven conditions under additional structural assumptions \cite{Liu2025,Liu2025regulation}. These works demonstrate the feasibility of data-driven output regulation, but they do not yet provide a direct synthesis method for nonlinear systems under general finite-spectrum almost periodic exosignals with noisy data and an explicit steady-state spectral characterization.

This paper develops such a framework for a class of nonlinear systems whose vector field is linearly parameterized over a known dictionary of continuously differentiable basis functions. The coefficient matrices multiplying the dictionary, the input, the exosignal, and the regulated output are unknown. The exosystem is neutrally stable with semisimple imaginary-axis eigenvalues, so its trajectories are finite sums of constants and sinusoidal signals. When the frequencies are rationally independent, the resulting signal is almost periodic but not periodic. Instead of requiring exact asymptotic regulation through the NOREs, we formulate a frequency-selective regulation objective: the controller must eliminate the Fourier--Bohr coefficients of the steady-state error at the embedded exosystem frequencies, while the remaining spectral components are quantified by a time-averaged energy bound.

The proposed controller combines an internal model with direct data-driven stabilization. A $p$-copy internal model embeds the constant mode and the selected oscillatory modes into a dynamic compensator. The plant and compensator are then written as an augmented nonlinear system. By exploiting the known dictionary structure and the known frequency basis of the exosystem, the unknown matrices in the augmented dynamics are represented by offline input-state data, while the unknown exosignal amplitudes and phases are eliminated through linear annihilation constraints. A robust semidefinite program (SDP) is derived to synthesize a feedback gain directly from noisy data. The SDP enforces an incremental dissipation inequality, which yields exponential contraction of the closed-loop dynamics on a prescribed operating region.

The use of contraction is important for two reasons. First, it avoids solving the NOREs and instead guarantees that all trajectories converge exponentially to a unique bounded solution driven by the exosignal. Second, since the exosignal is almost periodic and the closed-loop system is input-to-state convergent, the attracting steady-state solution is also almost periodic. This enables a frequency-domain analysis based on Fourier--Bohr coefficients. The boundedness of the internal-model state implies that the Fourier--Bohr coefficients of the steady-state error at the embedded frequencies vanish. A Parseval-type argument further bounds the residual error energy in terms of the derivative bound of the error and the smallest magnitude of the unmodeled frequencies.

The main contributions are summarized as follows.
\begin{itemize}
\item A frequency-selective data-driven output regulation problem is formulated for nonlinear systems under almost periodic exosignals with finite spectrum. The formulation avoids the NOREs and characterizes regulation through Fourier--Bohr coefficients of the steady-state error.
\item A direct noisy-data synthesis method is developed for an internal-model-based nonlinear regulator. The unknown plant, input, output, and exosignal-channel matrices are not identified; instead, a robust SDP constructs a gain that certifies contraction of the augmented closed-loop system.
\item A complete steady-state analysis is provided. Exponential contraction implies the existence, uniqueness, and exponential attractivity of an almost periodic steady-state trajectory. The embedded-frequency components of the regulation error are eliminated, and the residual frequency components are bounded in time-averaged energy.
\item Numerical and physics-based quadrotor simulations show that the proposed controller attenuates payload-induced oscillatory disturbances more effectively than a comparable data-driven controller without the internal model.
\end{itemize}

The remainder of the paper is organized as follows. Section~\ref{sec:pre} introduces almost periodic functions and formulates the problem. Section~\ref{sec:result} presents the internal-model-based controller, the data-driven SDP, and the regulation analysis. Section~\ref{sec:simu} validates the method on a quadrotor with a suspended payload. Section~\ref{sec:conclusion} concludes the paper.

\emph{Notation.} Let $\mathbb{R}$, $\mathbb{R}_{>0}$, and $\mathbb{N}$ denote the sets of real numbers, positive real numbers, and non-negative integers, respectively. 
The sets of real symmetric positive definite matrices in $\mathbb{R}^{n\times n}$ are written as $\mathbb{S}^n_{>0}$.
For a matrix $M$, $M \succ (\succeq)0$ and $M \prec (\preceq)0$ denote positive (semi-)definiteness and negative (semi-)definiteness, respectively. The spectral norm of a matrix $M$ is denoted by $\|M\|$, and the Euclidean norm of a vector $x \in \mathbb{R}^n$ by $\|x\|$. 
For a symmetric matrix $M$, $\lambda_{\min}(M)$ denote the smallest eigenvalue of $M$.
For a series of vectors $x_1,\ldots,x_N$, define ${\rm col}(x_1,\ldots,x_N) = [x_1^\top,\ldots,x_N^\top]^\top$. The operators ${\rm diag}\{\cdot\}$ and ${\rm blkdiag}\{\cdot\}$ denote diagonal and block diagonal matrices, respectively. The symbol $I_N$ denotes the $N$-dimensional identity matrix and $1_N$ the $N$-dimensional column vector of ones. In addition, the symbol $\star$ denote the symmetric part in symmetric matrices.

\section{Preliminaries and Problem Formulation}\label{sec:pre}
This section begins with a brief review of almost periodic functions and then formulates the considered nonlinear output regulation problem.

\subsection{Almost Periodic Functions}\label{sec:apf}

We first briefly recall several definitions and fundamental properties of almost periodic functions that will be used in the subsequent analysis.

\begin{definition}[{\cite[Theorem 1.11]{corduneanu2009almost}}]\label{def:apf}
A continuous function $f:\mathbb{R}\to\mathbb{R}^n$ is said to be almost periodic if for any $\varepsilon>0$, there exists a positive constant $l(\varepsilon)>0$ such that every interval of the real line of length $l(\varepsilon)$ contains at least one number $\tau$ satisfying
\[
\|f(t+\tau)-f(t)\|<\varepsilon,\quad \forall t\in\mathbb{R}.
\]
\end{definition}

We next recall several important properties of almost periodic functions, including the mean value, Fourier--Bohr coefficients, and a Parseval-type identity.

\begin{lemma}[{\cite[Theorem 1.12]{corduneanu2009almost}}]\label{lem:mean}
Let $f:\mathbb{R}\to\mathbb{R}^n$ be an almost periodic function. Then the limit
\[
M\{f\}:=\lim_{T\to\infty}\frac{1}{T}\int_{a}^{a+T} f(t)\,dt
\]
exists uniformly with respect to $a\in\mathbb{R}$. Moreover, the value $M\{f\}$ is independent of $a$ and is called the mean value of $f$.
\end{lemma}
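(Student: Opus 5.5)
We prove the existence of the mean value from Definition~\ref{def:apf} using two facts about almost periodic functions: uniform continuity and boundedness. Componentwise treatment suffices, so assume $n=1$ when convenient.

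First we collect the basic consequences of Definition~\ref{def:apf}. Take $\varepsilon=1$ and the length $l=l(1)$. Every $t$ can be written as $t=s+\tau$ with $s\in[0,l]$ and $\tau$ a $1$-almost period. Hence $|f(t)|\le \max_{[0,l]}|f|+1$, so $f$ is bounded, say by $K$. Uniform continuity follows by the same device: $f$ is uniformly continuous on the compact interval $[-l,2l]$, and almost periods transport this to all of $\mathbb{R}$ at the cost of $2\varepsilon$. We will need only boundedness below.

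The main step is a Cauchy-type estimate, uniform in $a$. Fix $\varepsilon>0$ and let $l=l(\varepsilon)$. For any $a\in\mathbb{R}$ and $T>0$, pick an $\varepsilon$-almost period $\tau\in[a,a+l]$. Then
\[
\left|\int_a^{a+T} f\,dt-\int_0^{T} f\,dt\right|
\le \left|\int_a^{\tau} f\right|+\left|\int_{\tau+T}^{a+T} f\right|+\left|\int_{\tau}^{\tau+T} f-\int_0^T f\right|
\le 2Kl+\varepsilon T ,
\]
where the last term uses $|f(t+\tau)-f(t)|<\varepsilon$. Dividing by $T$ gives
\[
\left|\frac1T\int_a^{a+T}f-\frac1T\int_0^T f\right|\le \frac{2Kl}{T}+\varepsilon ,
\]
uniformly in $a$.

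Next we show that $\phi(T):=\frac1T\int_0^T f$ is Cauchy as $T\to\infty$. For integers $m$, split $\int_0^{mT}f$ into $m$ blocks $\int_{jT}^{(j+1)T}f$. Applying the estimate to each block gives $|\phi(mT)-\phi(T)|\le 2Kl/T+\varepsilon$. For arbitrary $T_1,T_2$, compare both with $\phi(mT_1)$ and $\phi(m'T_2)$ where $mT_1\approx m'T_2$. The rounding mismatch is at most $\max(T_1,T_2)$, so it contributes an error of order $K\max(T_1,T_2)/(mT_1)\to0$. This yields $|\phi(T_1)-\phi(T_2)|\le 4Kl/\min(T_1,T_2)+2\varepsilon$, so the limit $M\{f\}$ of $\phi$ exists. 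Combining with the uniform estimate, $\frac1T\int_a^{a+T}f\to M\{f\}$ uniformly in $a$, with error at most $2Kl/T+\varepsilon+|\phi(T)-M\{f\}|$, and this limit is independent of $a$.

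The main obstacle is the bookkeeping in the Cauchy step, specifically keeping all constants independent of $a$ and handling non-commensurate $T_1,T_2$. Since the paper cites \cite[Theorem 1.12]{corduneanu2009almost}, this argument could alternatively be replaced by the reference.
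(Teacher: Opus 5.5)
Your proof is correct. The paper gives no proof of its own here; it simply cites \cite[Theorem 1.12]{corduneanu2009almost}, which is the fallback you mention at the end. Your route is the classical one from Bohr's definition (the paper's Definition~\ref{def:apf}), and, as you observe, it uses only the boundedness that this definition implies. Each step is sound: the shift estimate with an almost period $\tau\in[a,a+l]$, the block averaging $|\phi(mT)-\phi(T)|\le 2Kl/T+\varepsilon$, and the rounding step with $|mT_1-m'T_2|\le T_2$, whose cost $2KT_2/(mT_1)$ vanishes as $m\to\infty$. Every constant is independent of $a$, which is what delivers the uniformity. The other standard route is probably the natural one in the cited reference: judging from the paper's citation of Definition~\ref{def:apf} as a theorem there, the reference seems to take almost periodic functions as uniform limits of trigonometric polynomials and derive Bohr's property. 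In that setting one verifies the claim for $e^{i\lambda t}$, for which $\bigl|\frac{1}{T}\int_a^{a+T}e^{i\lambda t}\,dt\bigr|\le 2/(|\lambda|T)$ uniformly in $a$ when $\lambda\neq 0$. One then passes to uniform limits, since an $\varepsilon$-close approximant changes every window average by at most $\varepsilon$. That argument is two lines and avoids your commensurability bookkeeping. However, it presupposes the approximation theorem, which is deep if one starts from Bohr's definition, as the paper does. Yours is elementary and self-contained relative to the definition the paper actually states. The uniform-continuity aside in your first step is not needed and can be dropped.
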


For an almost periodic function $f$, its Fourier--Bohr coefficient associated with frequency $\lambda\in\mathbb{R}$ is defined as
\[
\hat f(\lambda):=\lim_{T\to\infty}\frac{1}{T}\int_{a}^{a+T} f(t)e^{-i\lambda t}\,dt,
\]
where $i$ is the imaginary unit and the limit exists uniformly with respect to $a\in\mathbb{R}$.

The set of frequencies corresponding to nonzero Fourier--Bohr coefficients is called the spectrum of $f$. Denote the spectrum by $\{\sigma_k\}_{k=1}^{\infty}$ and let $A_k := \hat f(\sigma_k).$
Then $f$ admits the Fourier expansion
\[
f(t)= \sum_{k=1}^{\infty} A_k e^{i\sigma_k t}.
\]

The following Parseval's equation will be used in the analysis.
\begin{lemma}[{\cite[Theorem 1.18]{corduneanu2009almost}}]\label{lem:parseval}
Let $f:\mathbb{R}\to\mathbb{R}^n$ be an almost periodic function with Fourier--Bohr coefficients $\{A_k\}$. Then Parseval's equation holds
\[
\sum_{k=1}^{\infty} \|A_k\|^2 = M\{\|f\|^2\}.
\]
\end{lemma}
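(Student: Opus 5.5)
The plan is the classical Bohr approach to Parseval's equation: reduce to the scalar case, establish Bessel's inequality by an orthogonality computation, and then close the gap with the approximation theorem for almost periodic functions.

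\textbf{Reduction and preliminary facts.} Since $\|A_k\|^2=\sum_{j=1}^n |A_k^{(j)}|^2$ and $\|f\|^2=\sum_j |f_j|^2$, it suffices to prove the scalar (complex-valued) identity componentwise and sum over $j$. We use the following standard facts:
\begin{itemize}
\item $|f|^2$ is almost periodic, because products of almost periodic functions are almost periodic (common $\varepsilon$-translation numbers plus uniform continuity and boundedness). Hence $M\{|f|^2\}$ exists by Lemma~\ref{lem:mean}.
\item $M\{e^{i(\lambda-\mu)t}\}=\delta_{\lambda\mu}$.
\item $f$ has at most countably many nonzero Fourier--Bohr coefficients, which is a consequence of Bessel's inequality below.
\end{itemize}

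\textbf{Bessel's inequality.} Take any finite set of distinct frequencies $\sigma_1,\dots,\sigma_N$ and set $s_N(t)=\sum_{k\le N}A_k e^{i\sigma_k t}$. Expand $M\{|f-s_N|^2\}\ge0$ using linearity of the mean and the orthogonality relation. This gives
\[
0\le M\{|f|^2\}-\sum_{k\le N}|A_k|^2 .
\]
Letting $N\to\infty$ yields $\sum_k|A_k|^2\le M\{|f|^2\}$.

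The same computation with arbitrary coefficients $c_k$ shows the minimizing property:
\[
M\{|f-\textstyle\sum c_k e^{i\sigma_k t}|^2\}\ \ge\ M\{|f-s_N|^2\}
\]
whenever the frequency set contains the $\sigma_k$.

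\textbf{Reverse inequality (main obstacle).} We need that $M\{|f-s_N|^2\}$ can be made arbitrarily small. The first attempt is the approximation theorem: for every $\varepsilon>0$ there is a trigonometric polynomial $P_\varepsilon(t)=\sum_{m} c_m e^{i\mu_m t}$ with $\sup_t|f(t)-P_\varepsilon(t)|<\varepsilon$. This theorem is proved via Bochner--Fejér summation, or equivalently via Bohr's uniqueness theorem that an almost periodic function with all Fourier--Bohr coefficients zero vanishes identically.

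Given $P_\varepsilon$, if some $\mu_m$ lie outside the spectrum, their Fourier coefficients $\hat f(\mu_m)$ are zero. Let $S$ be the union of the $\mu_m$ with $\sigma_1,\dots,\sigma_N$, with $N$ large enough that the $\sigma_k$ cover every spectral $\mu_m$. The best approximation on the frequency set $S$ is then exactly $s_N$ plus zero terms, so the minimizing property gives
\[
M\{|f-s_N|^2\}\le M\{|f-P_\varepsilon|^2\}\le\varepsilon^2 .
\]
Combined with the expansion from the Bessel step, this yields $M\{|f|^2\}-\sum_{k\le N}|A_k|^2\le\varepsilon^2$. Since $\varepsilon$ is arbitrary, equality follows.

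The delicate point is the approximation theorem itself. If it is not assumed from \cite{corduneanu2009almost}, I would prove it through Bohr's uniqueness theorem. Apply that theorem to the autocorrelation $g(\tau)=M\{f(t+\tau)\overline{f(t)}\}$, which is almost periodic, positive definite, and has Fourier coefficients $|A_k|^2$. Then $g(0)=M\{|f|^2\}$ equals $\sum|A_k|^2$ by uniform convergence of its absolutely summable Fourier series.
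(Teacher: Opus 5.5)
The paper does not prove this lemma. It quotes it directly from \cite[Theorem~1.18]{corduneanu2009almost}, so there is no in-paper argument to compare with, and your proof stands on its own. Your main line is correct, and it is the standard textbook proof. It runs: componentwise reduction; Bessel's inequality and the best-approximation property from $M\{e^{i(\lambda-\mu)t}\}=\delta_{\lambda\mu}$; then enlarging the frequency set to contain the exponents of a uniformly approximating trigonometric polynomial $P_\varepsilon$, which gives $M\{|f|^2\}-\sum_{k\le N}|A_k|^2=M\{|f-s_N|^2\}\le\varepsilon^2$. All of its depth sits in the approximation theorem. The paper cites Definition~\ref{def:apf} as a \emph{theorem} of \cite{corduneanu2009almost}. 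This suggests the book introduces almost periodic functions as uniform limits of trigonometric polynomials, in which case your argument is complete as it stands.

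Your closing paragraph should be reworded. You announce that the autocorrelation argument proves the approximation theorem, but it actually proves Parseval directly. The steps are:
\begin{itemize}
\item $g$ is almost periodic with $\hat g(\lambda)=|\hat f(\lambda)|^2$; exchanging the two means requires the uniformity in $a$ from Lemma~\ref{lem:mean}.
\item $h(\tau)=\sum_k|A_k|^2e^{i\sigma_k\tau}$ converges uniformly by Bessel's inequality.
\item Uniqueness applied to $g-h$ gives $M\{|f|^2\}=g(0)=h(0)=\sum_k|A_k|^2$.
\end{itemize}
Positive definiteness of $g$ is never used.

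This route is also not a more elementary fallback. Uniqueness follows from Parseval in one line, since a continuous almost periodic $f$ with $M\{|f|^2\}=0$ vanishes identically. So the two theorems are equivalent, and you have swapped one cited deep result for another of the same depth. If one starts from Bohr's definition as stated in Definition~\ref{def:apf}, some genuinely hard analytic ingredient must be imported either way, for example Weyl's integral-equation argument or Bogolyubov's construction of approximating polynomials.
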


With these preliminaries in place, we are now ready to formulate the problem considered in this paper.


\subsection{Problem Formulation}

Consider a class of continuous-time nonlinear systems described by
\begin{subequations}\label{eq:sys:non}
	\begin{align}
		\dot{x} & = AF(x)+Bu+Ev \\
		e &= CF(x)+Hv \label{eq:sys:non:e}
	\end{align}
\end{subequations}
where $x\in \mathbb{R}^{n}$ is the state, $u\in \mathbb{R}^{m}$ is the control input, and $v \in\mathbb{R}^{q}$ consists of the reference signal to be tracked and the disturbance to be rejected, which is generated by 
\begin{equation}\label{eq:sys:non:exo} 
    \dot{v}  = S v. 
\end{equation}
In particular, $e \in \mathbb{R}^{p}$ is the regulation error, which represents the mismatch between the system output $y=CF(x)$ and the output of the exosystem $y_v=-Hv$.

We express the drift vector field as a linear combination of known functions in $F: \R^n\rightarrow\R^{n_f}$, which includes linear and nonlinear functions, i.e.,
$$F(x)=\left[\begin{matrix}
    x\\Q(x)
\end{matrix}\right],$$ 
where $Q: \R^{n}\rightarrow\R^{n_f-n}$ contains only nonlinear functions. All the functions of $F(x)$ are continuously differentiable. 
In the considered data-driven setting, the matrices $A$, $B$, $C$, $E$, and $H$ are unknown, and the exosignal $v$ is not measurable.

In the classical output regulation framework, \eqref{eq:sys:non:exo} is referred to as the exosystem. To characterize its properties, we impose the following assumption.
\begin{assumption}\label{as:exo:non}
	The matrix $S$ in \eqref{eq:sys:non:exo} is known and all its eigenvalues are semisimple with zero real parts.
\end{assumption}

Under Assumption~\ref{as:exo:non}, the solution of the exosystem \eqref{eq:sys:non:exo} consists of constant and sinusoidal components with arbitrary amplitudes and initial phases, whose frequencies are determined by the eigenvalues of $S$. In particular, the exosignal $v(t)$ can be expressed as a finite linear combination of constant and sinusoidal functions with known frequencies. Therefore, it can be concluded that $v(t)$ is an almost periodic function in the sense of Definition~\ref{def:apf}. Moreover, such signals capture a wide range of practically relevant references and disturbances, including multi-frequency excitations whose amplitudes depend on the unknown initial condition of the exosystem.

Furthermore, it follows from Assumption~\ref{as:exo:non} that the exosystem~\eqref{eq:sys:non:exo} admits the following decomposition. In particular, there exist positive integers $r$ and $s$ satisfying $r+2s=q$, and known frequencies $\sigma_1,\ldots,\sigma_s\in\mathbb{R}_{>0}$ such that
\[
v_i(t)=v_i(0), \quad i=1,\ldots,r,
\]
and for each $k=1,\ldots,s$, the block component $v_{r+k}(t)\in\mathbb{R}^2$ satisfies
\[
v_i(t)=\underbrace{\begin{bmatrix}
   v_{r+k,1}(0) & v_{r+k,2}(0)\\
   v_{r+k,2}(0) & -v_{r+k,1}(0)
\end{bmatrix}}_{:=L_{r+k}(v(0))}
\underbrace{\begin{bmatrix}
      \cos(\sigma_k t)\\
       \sin(\sigma_k t)
\end{bmatrix}}_{:=M_{r+k}(t)}.
\]
Consequently, the exosignal $v(t)$ can be expressed in the compact form
\begin{equation}\label{eq:decompose}
    v(t)=LM(t),
\end{equation}
where
\begin{align*}
&L = {\rm blkdiag}\big\{
v_1(0),\ldots,v_r(0),\\
&\qquad\qquad\qquad L_{r+1}(v(0)),\ldots,L_{r+s}(v(0))
\big\},\\
&M(t) = {\rm col}\big(
1_r, M_{r+1}(t),\ldots,M_{r+s}(t)
\big).
\end{align*}
Here, $M(t)$ consists of known basis functions determined by the frequencies $\{\sigma_k\}$, while $L$ depends on the unknown initial condition $v(0)$.

In classical output regulation, the objective is to drive the regulation error $e(t)$ to zero. However, in nonlinear and data-driven settings, such a requirement is often difficult to achieve. This is because achieving exact regulation typically relies on solving the so-called NOREs, which seek sufficiently smooth mappings $\pi(v)$ and $\varrho(v)$ satisfying $\pi(0)=0$ and $\varrho(0)=0$ such that
\begin{subequations}\label{eq:NORE}
\begin{align}
\frac{\partial \pi(v)}{\partial v}Sv
&=
AF({\pi}(v))+B{\varrho}(v)+Ev,
\\
0
&=
CF(\pi(v))+Hv.
\end{align}
\end{subequations}
However, solving the NOREs \eqref{eq:NORE} is generally difficult even when the system model is fully known, and becomes intractable in the data-driven setting considered here, where the system dynamics are unknown. 
Motivated by this observation, we do not pursue exact regulation via explicit solutions of the NOREs. Instead, we exploit the internal model principle~\cite{huang2004nonlinear}, which embeds the exosystem dynamics into the controller and reformulates the regulation problem as a stabilization problem of an augmented system.
However, existing data-driven harmonic regulation methods for nonlinear systems, e.g., \cite{hu2024output}, typically impose restrictive assumptions on the exosystem, such as requiring rationally related frequencies, thereby limiting the admissible signals to periodic ones. In contrast, under Assumption~\ref{as:exo:non}, the exosignal is allowed to be almost periodic, which enlarges the class of admissible signals but also introduces additional challenges.
Accordingly, rather than aiming at exact asymptotic regulation, this paper adopts a frequency-domain perspective and studies a data-driven output regulation problem for nonlinear systems subject to almost periodic exosignals, where selected frequency components of the steady-state error are eliminated while the residual part is quantitatively characterized. We next formally state the problem.

\begin{problem}\label{pro}
Given the nonlinear system \eqref{eq:sys:non}, the known frequency set $\{0,\sigma_1,\ldots,\sigma_s\}$ of the exosystem, and offline measured input--state--error data, design a dynamic feedback controller without identifying the matrices $A,B,C,E,H$ and without measuring the exosignal amplitudes such that: i) all closed-loop trajectories remain bounded and converge to a unique almost periodic steady-state response; ii) the Fourier--Bohr coefficients of the steady-state regulation error at $0,\sigma_1,\ldots,\sigma_s$ are zero; and iii) the residual spectral content of the steady-state error is quantitatively bounded.
\end{problem}

\section{Data-Driven Regulator Design and Analysis}\label{sec:result}
In this section, we develop a data-driven internal model-based controller for the nonlinear system \eqref{eq:sys:non}, thereby solving Problem~\ref{pro} directly from noisy data.
We first introduce the regulator structure and the resulting closed-loop system, and then present the data-driven synthesis method together with the stability and regulation analysis.

\subsection{Regulator Structure and Closed-loop System}

We first specify the dynamic regulator. The construction follows the internal model principle, but the role of the internal model here is frequency-selective: it embeds only the constant mode and the prescribed sinusoidal modes of the exosystem. Consequently, the controller is not claimed to solve the full NOREs. Instead, the embedded modes are used to cancel the corresponding Fourier--Bohr coefficients of the steady-state regulation error.

Define $\mathcal{F}(x,z)=\left[\begin{smallmatrix}
    x\\z\\Q(x)
\end{smallmatrix}\right]$. The controller is given by
\begin{subequations}\label{eq:ctrl}
	\begin{align}
		u &= \mathcal{K}\mathcal{F}(x,z)\label{eq:ctrl:u}\\
		\dot{z} &= \Phi z + \Gamma e \label{eq:inter_model}
	\end{align}
\end{subequations}
where the gain matrix $\mathcal K$ is to be designed and $z\in\mathbb{R}^{n_z}$ denotes the internal model state. The matrices $(\Phi,\Gamma)$ are constructed as follows
\begin{equation}\label{eq:inter_matrices}
    \Phi={\rm blkdiag}\{\underbrace{\phi,\ldots,\phi}_{p~\text{times}}\},~~\Gamma={\rm blkdiag}\{\underbrace{G,\ldots,G}_{p~\text{times}}\}
\end{equation}
with
\begin{align*}
      \phi &= {\rm blkdiag}\left\{0,\begin{bmatrix}
    0 & \sigma_1\\
    -\sigma_1 & 0
\end{bmatrix},\ldots,\begin{bmatrix}
    0 & \sigma_s\\
    -\sigma_s & 0
\end{bmatrix}\right\}\\
G &= {\rm col}\left( \gamma,
\bar G_1,\ldots,\bar G_s \right),
\end{align*}
where $\gamma>0$ and $\bar G_k\in\mathbb{R}^2\setminus\{0\}$ are chosen such that each pair $\left(\left[\begin{smallmatrix}
0 & \sigma_k\\
-\sigma_k & 0
\end{smallmatrix}\right],\bar{G}_k\right),\quad k=1,\ldots,s,$ is controllable.
By construction, the matrix $\Phi$ embeds the frequency set
$\{0,\sigma_1,\ldots,\sigma_s\}$ associated with the exosystem. This  enables the controller to reproduce signals at these frequencies, which plays a key role in shaping the steady-state regulation error in the frequency domain.

To facilitate the subsequent data-driven design, we reorganize the system representation by partitioning the matrices $A$ and $C$ according to the structure of $F(x)$ as
$\left[\begin{smallmatrix}
    A\\C
\end{smallmatrix}\right]
=\left[\begin{smallmatrix}
    \bar{A} &\hat{A}\\ \bar{C} &\hat{C}
\end{smallmatrix}\right]$.
Then, combining \eqref{eq:sys:non} and \eqref{eq:inter_model}, the resulting augmented system can be expressed as
\begin{subequations}\label{eq:aug:sys}
    \begin{align}
        \begin{bmatrix}
            \dot x\\ \dot z
        \end{bmatrix} &= {A_\xi} \mathcal{F}(x,z) + {B_\xi}u + {E_\xi}v,\label{eq:aug:sys:x}\\
        e&=C_{\xi}\mathcal{F}(x,z)+Hv,\label{eq:aug:sys:e}
    \end{align}
\end{subequations}
where
\begin{align*}
    A_\xi&=\begin{bmatrix}
            \bar{A} & 0 & \hat{A}\\
            \Gamma \bar{C} & \Phi & \Gamma \hat{C}
        \end{bmatrix}, ~B_\xi=\begin{bmatrix}
            B\\0
        \end{bmatrix},\\
        E_\xi&=\begin{bmatrix}
            E\\ \Gamma H
        \end{bmatrix},\quad ~C_{\xi}=\begin{bmatrix}
            \bar C & 0 & \hat C
        \end{bmatrix}.
\end{align*}
Under the controller \eqref{eq:ctrl}, the augmented system \eqref{eq:aug:sys:x} takes the closed-loop form
\begin{equation}\label{eq:closeloop:model}
\begin{bmatrix}
\dot x\\ \dot z
\end{bmatrix}
=
\big(A_\xi + B_\xi \mathcal{K}\big)\mathcal{F}(x,z) + E_\xi v.
\end{equation}

\subsection{Data-Driven Regulator Design}

We now derive a direct data-driven synthesis condition. During the offline experiment the controller state $z$ is generated from the measured regulation error $e$ through \eqref{eq:inter_model}, while the input $u$ is chosen to be sufficiently exciting. The exosignal $v$ itself is not assumed to be measured. For each sampling interval $\tau\in[t_\ell,t_{\ell+1})$, the derivatives are approximated by forward differences, namely $\dot x (\tau):=\frac{x(t_{\ell+1})-x(t_{\ell})}{t_{\ell+1}-t_{\ell}}$ and $\dot z (\tau):=\frac{z(t_{\ell+1})-z(t_{\ell})}{t_{\ell+1}-t_{\ell}}$. The resulting approximation error, including measurement noise and finite-difference error, is collected in $d(\tau)$. Therefore, for $\tau=0,\ldots,T-1$, the augmented dynamics satisfy
\begin{equation}\label{eq:noise-data-xi}
     \begin{bmatrix}
            \dot x(\tau)\\ \dot z(\tau)
        \end{bmatrix} ={A_\xi} \mathcal{F}(x(\tau),z(\tau)) + {B_\xi}u(\tau) + {E_\xi}v(\tau) + d(\tau),
\end{equation}
where $d(\tau) \in \mathbb{R}^{n_{\xi}}$ with $n_{\xi}=n+n_z$ denotes the unknown approximation error induced by the finite-difference approximation and possible measurement noise.

By stacking the collected data, we define the following data matrices
\begin{subequations}\label{eq:data:lti}
	\begin{align}
		U_- & = \left[\begin{matrix}
			{u}(0) & {u}(1) & \cdots & {u}(T-1)
		\end{matrix}\right] \label{eq:data:lti:U-}\\
  \Xi_- &= \left[\begin{matrix}
			{x}(0) & {x}(1) & \cdots & {x}(T-1)\\
            {z}(0) & {z}(1) & \cdots & {z}(T-1)\\
            Q(x(0)) &  Q(x(1)) & \cdots &  Q(x(T-1))\\
		\end{matrix}\right] \label{eq:data:lti:Xi-}\\
  \Xi_+ &= \left[\begin{matrix}
			\dot{x}(0) & \dot{x}(1) & \cdots & \dot{x}(T-1)\\
            \dot{z}(0) & \dot{z}(1) & \cdots & \dot{z}(T-1)
		\end{matrix}\right]\label{eq:data:lti:Xi+}.
	\end{align}
Similarly, define the data matrices associated with the disturbance and exosignal as
    \begin{align}
	D_- &= \left[\begin{matrix}
		{d}(0) & {d}(1) & \cdots & {d}(T-1)
	\end{matrix}\right] \label{eq:data:lti:D}\\
	V_- &= \left[\begin{matrix}
		{v}(0) & {v}(1) & \cdots & {v}(T-1)
	\end{matrix}\right] \label{eq:data:lti:V}.
\end{align}
\end{subequations}
According to \eqref{eq:noise-data-xi}, the above data matrices satisfy 
\begin{equation}\label{eq:relation}
    \Xi_+=A_{\xi}\Xi_-+B_{\xi}U_-+E_{\xi}V_-+D_-.
\end{equation}

From the decomposition of the exosystem established in \eqref{eq:decompose}, the exosignal data matrix $V_-$ admits the factorization $V_-=LM_-$, where $M_-=\begin{bmatrix}
    M(0) & M(1) & \ldots & M(T-1)
\end{bmatrix}$.
Substituting this into \eqref{eq:relation} yields
\begin{equation}\label{eq:data:relation}
\Xi_+ = A_{\xi}\Xi_- + B_{\xi}U_- + E_{\xi}L M_- + D_-.
\end{equation}
We note that the matrix $L$ depends on the unknown initial condition of the exosystem and is therefore unknown, whereas $M_-$ is completely determined by the known frequency set $\{\sigma_k\}$ and is thus known.

Next, to characterize the effect of the unknown disturbance $D_-$, we impose the following assumption.
\begin{assumption}\label{as:D_bound}
    A matrix $\Theta\succeq 0$ is available such that the stacked differentiation and measurement error satisfies
    \begin{equation}
        D_-\in\mathcal{D}:=\{D\in\R^{n_{\xi}\times T}: DD^\top\preceq \Theta \Theta^\top\}.
    \end{equation}
    The matrix $\Theta$ may be selected from sensor specifications, repeated experiments, or a conservative a priori bound on the derivative-estimation error.
\end{assumption}

The following lemma synthesizes a controller gain $\K$  directly from noisy data and enforces exponential contraction of the resulting closed-loop system.

\begin{lemma}\label{lem:contraction}
Consider the augmented system \eqref{eq:aug:sys} under the controller \eqref{eq:ctrl}. 
Assume that there exists a known matrix $R_Q\in\mathbb{R}^{n\times r}$ and a set $\mathcal{X}\subseteq\R^{n}$ such that
\begin{equation}\label{eq:nonlin:bound}
\frac{\partial Q}{\partial x}(x)^\top \frac{\partial Q}{\partial x}(x)
\preceq R_Q R_Q^\top, \quad \forall x\in\mathcal{X}.
\end{equation}
Define $\mathcal{R}_Q := \left[\begin{smallmatrix} R_Q \\ 0 \end{smallmatrix}\right].$
Suppose there exist matrices $P\in\mathbb{S}^{n_\xi}_{>0}$, 
$Y\in\mathbb{R}^{T\times n_\xi}$, $\hat{G}\in\mathbb{R}^{T\times (n_f-n)}$, 
and scalars $\alpha>0$, $\mu>0$ satisfying
\begin{subequations}\label{eq:sdp}
\begin{align}
&\Xi_- Y = \begin{bmatrix} P \\ 0 \end{bmatrix}, \label{eq:sdp:1}\\
&\Xi_- \hat{G} = \begin{bmatrix} 0 \\ I_{n_f-n} \end{bmatrix}, \label{eq:sdp:2}\\
&M_- \begin{bmatrix} Y & \hat{G} \end{bmatrix} = 0, \label{eq:sdp:3}\\
&\begin{bmatrix}
\Upsilon(Y,\alpha,\mu)& \Xi_+ \hat{G} & P\mathcal{R}_Q & Y^\top \\
\star & -I_{n_f-n} & 0 & \hat{G}^\top \\
\star & \star & -I_r & 0 \\
\star & \star & \star & -\mu I_T
\end{bmatrix}
\preceq 0,
\label{eq:sdp:4}
\end{align}
\end{subequations}
where $\Upsilon(Y,\alpha,\mu):=\Xi_+ Y + (\Xi_+ Y)^\top + \alpha I_{n_\xi} + \mu \Theta\Theta^\top$.
Then the controller \eqref{eq:ctrl} with
\begin{equation}\label{eq:K}
\mathcal{K} = U_- \begin{bmatrix} YP^{-1} & \hat{G} \end{bmatrix}
\end{equation}
renders the closed-loop system exponentially contractive on the operating set $\mathcal X\times\mathbb{R}^{n_z}$, with respect to the constant metric $P^{-1}$, for all admissible errors $D_-\in\mathcal D$.
\end{lemma}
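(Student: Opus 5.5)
The plan is to express the closed-loop matrix directly in terms of the data, compute the Jacobian of the closed-loop vector field, and show that \eqref{eq:sdp:4} forces the contraction inequality $J(x)P+PJ(x)^\top\preceq-\alpha I_{n_\xi}$ uniformly over $x\in\mathcal X$ and all $D_-\in\mathcal D$.

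\textbf{Data-based closed-loop matrix.} Let $W:=\begin{bmatrix} YP^{-1} & \hat G\end{bmatrix}$. By \eqref{eq:sdp:1}--\eqref{eq:sdp:2} we have $\Xi_-W=I$, where $I$ is the identity of size $n_\xi+n_f-n$ (the number of rows of $\Xi_-$), and by \eqref{eq:sdp:3} we have $M_-W=0$. Right-multiplying \eqref{eq:data:relation} by $W$ and using \eqref{eq:K} then gives
\[
A_\xi+B_\xi\mathcal K=(\Xi_+-D_-)W .
\]
The unknown amplitude matrix $L$ is thus annihilated, and only $D_-$ remains unknown. Partition the closed-loop matrix as $[\,A_1\;\;A_2\,]$, where $A_1=(\Xi_+-D_-)YP^{-1}$ acts on $(x,z)$ and $A_2=(\Xi_+-D_-)\hat G$ acts on $Q(x)$.

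\textbf{Jacobian and target inequality.} The Jacobian of the closed-loop vector field \eqref{eq:closeloop:model} with respect to $\xi=(x,z)$ is
\[
J(x)=A_1+A_2\Delta(x),\qquad \Delta(x):=\begin{bmatrix}\frac{\partial Q}{\partial x}(x) & 0\end{bmatrix},
\]
and \eqref{eq:nonlin:bound} gives $\Delta^\top\Delta\preceq\mathcal R_Q\mathcal R_Q^\top$ on $\mathcal X$. Exponential contraction in the metric $P^{-1}$ will follow from
\[
J(x)P+PJ(x)^\top\preceq-\alpha I_{n_\xi}\quad \text{for all } x\in\mathcal X.
\]
Indeed, pre- and post-multiplying by $P^{-1}$ yields $P^{-1}J+J^\top P^{-1}\preceq-\alpha P^{-2}\preceq-(\alpha/\lambda_{\max}(P))P^{-1}$. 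The standard integral-along-segments argument, applied on $\mathcal X\times\mathbb R^{n_z}$ (with the implicit convexity or invariance of the operating set), then gives the incremental exponential bound. Expanding the target,
\[
JP+PJ^\top=\Xi_+Y+(\Xi_+Y)^\top+\operatorname{sym}\big(\Xi_+\hat G\Delta P\big)-\operatorname{sym}\big(D_-N\big),
\qquad N:=Y+\hat G\Delta P .
\]

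\textbf{Bounding the uncertainty.} Both uncertainties, $\Delta$ and $D_-$, are norm-bounded, and I would remove them with Young/Petersen-type inequalities:
\[
\operatorname{sym}(D_-N)\preceq \mu D_-D_-^\top+\mu^{-1}N^\top N\preceq\mu\Theta\Theta^\top+\mu^{-1}N^\top N .
\]
The main obstacle is that $N$ itself contains $\Delta$, coupled with $\Xi_+\hat G\Delta P$. Bounding the two $\Delta$-terms separately would lose the off-diagonal block $\hat G^\top$ in the $(2,4)$ entry of \eqref{eq:sdp:4}. So I would keep them together. By a Schur complement on the $-\mu I_T$ block and then on $-I_{n_f-n}$, \eqref{eq:sdp:4} is equivalent to a quadratic-form inequality in the stacked vector $(\zeta,\ \hat G^\top\Xi_+^\top\zeta,\ \ldots)$. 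Then, for arbitrary $\zeta$, I would set $w:=\Delta P\zeta$ and use $\|w\|^2\le\|\mathcal R_Q^\top P\zeta\|^2$, which is exactly what the $P\mathcal R_Q$ column with $-I_r$ encodes. This is an S-procedure with unit multiplier. The result is
\[
\zeta^\top\Big(\Upsilon+\operatorname{sym}(\Xi_+\hat G\Delta P)+\mu^{-1}N^\top N-\mu\Theta\Theta^\top\Big)\zeta\le0 ,
\]
where $\Upsilon$ already contains $\alpha I_{n_\xi}+\mu\Theta\Theta^\top$. Combining this with the $D_-$ bound gives $JP+PJ^\top\preceq-\alpha I_{n_\xi}$. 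Verifying that the cross terms produced by the two Schur steps match exactly the entries $\Xi_+\hat G$, $\hat G^\top$ and $Y^\top$ in \eqref{eq:sdp:4} is the step I expect to require the most care; if an exact match fails, I would add a slack scalar multiplier on the $\Delta$-constraint and absorb it into $\mathcal R_Q$.
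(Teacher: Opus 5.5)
Your proposal is correct and is essentially the paper's proof. Both arguments use the same ingredients:
\begin{itemize}
\item the representation $A_\xi+B_\xi\mathcal K=(\Xi_+-D_-)\begin{bmatrix}YP^{-1}&\hat G\end{bmatrix}$, with $L$ annihilated by \eqref{eq:sdp:3};
\item a $\mu$-weighted Young/Petersen bound on $D_-$;
\item a unit-multiplier bound on $\partial Q/\partial x$ through the $P\mathcal R_Q$ block.
\end{itemize}
The paper removes $D_-$ first at the lifted-matrix level, then handles the Jacobian via Schur complements, congruence by $P^{-1}$ and Young's inequality. Your fix-$x$ quadratic-form (S-procedure) evaluation is the same computation in a different order.

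One bookkeeping fix is needed. Evaluate the Schur complement of \eqref{eq:sdp:4} with respect to $-\mu I_T$ at $(\zeta,\Delta P\zeta,\mathcal R_Q^\top P\zeta)$. This gives $\zeta^\top\big(\Upsilon+\mathrm{sym}(\Xi_+\hat G\Delta P)+\mu^{-1}N^\top N\big)\zeta\le\|\Delta P\zeta\|^2-\|\mathcal R_Q^\top P\zeta\|^2\le 0$, with $\mu\Theta\Theta^\top$ kept inside $\Upsilon$. Your extra $-\mu\Theta\Theta^\top$ must be dropped; otherwise combining with the $D_-$ bound only yields $JP+PJ^\top\preceq-\alpha I_{n_\xi}+\mu\Theta\Theta^\top$. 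With that correction the entries match exactly, so neither a second Schur step nor a slack multiplier is needed.
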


\begin{proof}
    We first exploit the data relation \eqref{eq:data:relation} to derive a data-driven representation of the closed-loop system \eqref{eq:closeloop:model}.

    Introduce a matrix $G=\begin{bmatrix}\bar{G} & \hat{G}\end{bmatrix}$ with $\bar{G}\in\mathbb{R}^{T\times n_\xi}$ and $\hat{G}\in\mathbb{R}^{T\times (n_f-n)}$. 
    If the SDP \eqref{eq:sdp} is feasible and we set $Y=\bar{G}P$, then it follows from \eqref{eq:sdp:1}--\eqref{eq:sdp:3} and \eqref{eq:K} that
    \begin{equation}\label{eq:LUM=IK0}
        \begin{bmatrix}
        \mathcal K\\
        I\\
        0
        \end{bmatrix}
        =
        \begin{bmatrix}
        U_-\\
        \Xi_-\\
        M_-
        \end{bmatrix} G.
    \end{equation}

    Pre-multiplying $\begin{bmatrix} B_{\xi} & A_{\xi} & E_{\xi}L \end{bmatrix}$ on both sides of \eqref{eq:LUM=IK0} yields
    \begin{align}\label{eq:xiG}
        A_{\xi} + B_{\xi} \mathcal K
        = \begin{bmatrix} B_{\xi} & A_{\xi} & E_{\xi}L \end{bmatrix}
        \begin{bmatrix}
        U_-\\
        \Xi_-\\
        M_-
        \end{bmatrix} G \overset{\eqref{eq:data:relation}}{=} (\Xi_{+}-D_-) G.
    \end{align}
    Substituting into the closed-loop system~\eqref{eq:closeloop:model} leads to the data-dependent dynamics
    \begin{equation}\label{eq:closeloop:data}
        \begin{aligned}
        \begin{bmatrix}
        \dot x\\ \dot z
        \end{bmatrix}
        &=(\Xi_{+}-D_-) G \mathcal{F}(x,z) + E_\xi v\\
        &=(\Xi_{+}-D_-) \bar G \begin{bmatrix} x\\ z \end{bmatrix}
        +(\Xi_{+}-D_-) \hat G Q(x)+ E_\xi v.
        \end{aligned}
    \end{equation}

    We now proceed to show the contraction of the closed-loop system. By the Schur complement, \eqref{eq:sdp:4} is equivalent to
    \begin{equation}\label{eq:lmi_splitted}
        \begin{aligned}
            &\begin{bmatrix}
            \Xi_+ Y + (\Xi_+ Y)^\top + \alpha I_{n_{\xi}} & \Xi_+ \hat{G} & P \mathcal{R}_Q \\
            \star & -I_{n_f-n} & 0 \\
            \star & \star & -I_r
            \end{bmatrix} \\
            &+ \mu
            \begin{bmatrix}
            - I \\ 0 \\ 0
            \end{bmatrix}
            \Theta \Theta^\top
            \begin{bmatrix}
            - I \\ 0 \\ 0
            \end{bmatrix}^\top
            + \mu^{-1}
            \begin{bmatrix}
            Y^\top \\ \hat{G}^\top \\ 0
            \end{bmatrix}
            \begin{bmatrix}
            Y^\top \\ \hat{G}^\top \\ 0
            \end{bmatrix}^\top
            \preceq 0 .
        \end{aligned}
    \end{equation}

    Applying Petersen's lemma \cite{bisoffi2022data} to eliminate $D$ satisfying $DD^\top\preceq\Theta\Theta^\top$, we obtain from \eqref{eq:lmi_splitted} that for all admissible $D$,
    \begin{equation}\label{eq:lmi_with_D}
    \begin{aligned}
        &
        \begin{bmatrix}
        \Xi_+ Y + (\Xi_+ Y)^\top + \alpha I_{n_\xi}
        & \Xi_+ \hat{G}
        & P \mathcal{R}_Q \\
        (\Xi_+ \hat{G})^\top
        & -I_{n_f-n}
        & 0 \\
        \mathcal{R}_Q^\top P^\top
        & 0
        & -I_r
        \end{bmatrix} \\
       & + \begin{bmatrix}
        - I \\ 0 \\ 0
        \end{bmatrix}
        D
        \begin{bmatrix}
        Y & \hat{G} & 0
        \end{bmatrix}
        + \begin{bmatrix}
        Y^\top \\ \hat{G}^\top \\ 0
        \end{bmatrix}
        D^\top
        \begin{bmatrix}
        - I & 0 & 0
        \end{bmatrix} \preceq 0,
        \end{aligned}
    \end{equation}
    which implies
    \begin{equation}
        \label{eq:lmi_rewritten}
        \begin{bmatrix}
        \widetilde{\Upsilon}(Y,\alpha)
        & (\Xi_+ -  D) \hat{G}
        & P \mathcal{R}_Q \\
       \star
        & -I_{n_f-n}
        & 0 \\
        \star
        & \star
        & -I_r
        \end{bmatrix}
        \preceq 0,
        \quad \forall D \in \mathcal{D}.
    \end{equation}
    with $\widetilde{\Upsilon}(Y,\alpha):=(\Xi_+ -  D) Y
        + Y^\top (\Xi_+ - D)^\top
        + \alpha I_{n_\xi}$.
    
    Applying the Schur complement yields
    \begin{equation}
        \begin{bmatrix}
        \widetilde{\Upsilon}(Y,\alpha) + P \mathcal{R}_Q \mathcal{R}_Q^\top P
        & (\Xi_+ -  D) \hat G \\
        \star
        & -I_{n_f-n}
        \end{bmatrix}
        \preceq 0 .
    \end{equation}
    In addition, a second application of the Schur complement gives
    \begin{equation}\label{eq:lmi_schur2}
    \widetilde{\Upsilon}(Y,\alpha) + P \mathcal{R}_Q \mathcal{R}_Q^\top P + (\Xi_+ -  D)\hat G \big((\Xi_+ -  D)\hat G\big)^\top  \preceq 0 .  
    \end{equation}
    Left- and right-multiplying \eqref{eq:lmi_schur2} by $P^{-1}$ and recalling $\bar{G} := Y P^{-1}$ yield
    \begin{equation}\label{eq:lmi_transformed}
        \begin{aligned}
        &P^{-1} (\Xi_+ -  D) \bar G
        + ((\Xi_+ -  D) \bar G)^\top P^{-1}
        + \alpha P^{-2} \\
        &+ \mathcal{R}_Q \mathcal{R}_Q^\top
        + P^{-1}(\Xi_+ - D)\hat G ((\Xi_+ - D)\hat G)^\top P^{-1}
        \preceq 0.
        \end{aligned}
    \end{equation}
    Then, by Petersen’s lemma, we have that 
    \begin{equation}
        \begin{split}
          &P^{-1} (\Xi_+ -  D) \bar G + ((\Xi_+ -  D) \bar G)^\top P^{-1} + \alpha P^{-2}\\
          & + P^{-1} (\Xi_+-D) \hat G \widetilde{\mathcal{R}}
           + \widetilde{\mathcal{R}}^\top ((\Xi_+ -D)\hat G)^\top P^{-1}   \preceq 0
            \end{split}
    \end{equation}
    holds for all $\widetilde{\mathcal{R}}\in\mathcal{R}:=\{\widetilde{\mathcal{R}}: \widetilde{\mathcal{R}}\widetilde{\mathcal{R}}^\top\preceq  \mathcal R_Q \mathcal R_Q^\top\}$.
    
    From \eqref{eq:nonlin:bound}, we have
  \begin{equation*}
    \frac{\partial Q}{\partial (x,z)}^\top\frac{\partial Q}{\partial (x,z)}\preceq \begin{bmatrix}
        R_QR_Q^\top & 0\\
        0&0
    \end{bmatrix}=\mathcal{R}_Q \mathcal{R}_Q^\top
\end{equation*}
    Hence,
\begin{equation}
\begin{aligned}
P^{-1} &(\Xi_+ -  D) \bar G
+ \big((\Xi_+ -  D) \bar G\big)^\top P^{-1}
+ \alpha P^{-2} \\
&+ P^{-1} (\Xi_+ -  D)\hat G \frac{\partial Q}{\partial (x,z)}\\
&+ \frac{\partial Q}{\partial (x,z)}^\top \big((\Xi_+ -  D) \hat G\big)^\top P^{-1}
\preceq 0.
\end{aligned}
\end{equation}

Let $\beta = \alpha \lambda_{\min}(P^{-1})$ and recall $G=\begin{bmatrix}\bar G & \hat G\end{bmatrix}$. Then
\begin{equation}\label{eq:lyapunov}
    \begin{split}
       & P^{-1} (\Xi_+ -  D) G \frac{\partial \mathcal F}{\partial (x,z)}\\
        &\qquad+ \frac{\partial \mathcal F}{\partial (x,z)}^\top
        ((\Xi_+ -  D) G)^\top P^{-1}
        \preceq -\beta P^{-1}.
    \end{split}
\end{equation}
Using \eqref{eq:xiG}, this is equivalent to
\begin{equation}\label{eq:contraction}
    \begin{split}
    &\Big( (A_\xi + B_\xi \mathcal K)\frac{\partial\mathcal{F}}{\partial(x,z)} \Big)^\top P^{-1}\\
&\qquad+ P^{-1}\Big( (A_\xi + B_\xi \mathcal K)\frac{\partial\mathcal{F}}{\partial(x,z)} \Big)
\preceq -\beta P^{-1}.
    \end{split} 
\end{equation}
This establishes exponential contraction with respect to the metric induced by $P^{-1}$, and hence the closed-loop system with $\mathcal K$ defined as \eqref{eq:K} is exponentially contractive on $\mathcal{X}\times \R^{n_z}$.
\end{proof}

\begin{remark}
Note that feasibility of the SDP~\eqref{eq:sdp} typically requires that $\Xi_{-}$ has full row rank, since the equality constraints \eqref{eq:sdp:1}--\eqref{eq:sdp:2} must be satisfied.
Conversely, the full row rank condition of the stacked matrix $\left[\begin{smallmatrix}
        U_-\\
        \Xi_-\\
        M_-
        \end{smallmatrix}\right]$ provides a sufficient condition for feasibility.
This condition, however, is not necessary, and feasibility of \eqref{eq:sdp} may still hold even when it is not satisfied.
Therefore, the feasibility of \eqref{eq:sdp} itself serves as an implicit certificate that the collected data are sufficiently informative for controller synthesis.
\end{remark}

Lemma~\ref{lem:contraction} shows that the feedback gain $\K$ designed from data renders the closed-loop system \eqref{eq:closeloop:model} exponentially contractive. 
We next establish that the closed-loop system admits a unique bounded entire solution, namely a solution defined on the whole time axis $t\in\mathbb{R}$, which is almost periodic and to which all solutions converge exponentially.

\begin{lemma}\label{lem:AP_ss}
Consider the closed-loop dynamics \eqref{eq:closeloop:model} and define $\xi:=\mathrm{col}(x,z)$. Suppose Assumption~\ref{as:exo:non} holds.
Let the feedback gain $\K$ be designed as \eqref{eq:K} according to Lemma~\ref{lem:contraction}. Assume that the closed-loop solutions considered remain in the operating set on which the contraction inequality holds. Then, the closed-loop system admits a unique bounded entire solution $\xi_v^*$, which is almost periodic, and all solutions initialized in the contraction region converge to it exponentially.
\end{lemma}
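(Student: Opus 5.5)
The plan is to regard \eqref{eq:closeloop:model} as a contracting system driven by the almost periodic input $w(t):=E_\xi v(t)$. We then use the incremental estimate from Lemma~\ref{lem:contraction} three times: to get convergence of solutions to one another, to build an entire solution, and to transfer almost periodicity from $v$ to that solution. Write $f(\xi,w):=(A_\xi+B_\xi\mathcal K)\mathcal F(\xi)+w$ and $V(\delta)=\delta^\top P^{-1}\delta$.

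\textbf{Incremental estimate.} Take two solutions $\xi_1,\xi_2$ driven by inputs $w_1,w_2$ and staying in the contraction region. If the region is not convex, we restrict to a convex subset, or assume the segment joining the states stays in $\mathcal X\times\mathbb R^{n_z}$. Integrate the Jacobian $J(\xi)=(A_\xi+B_\xi\mathcal K)\partial\mathcal F/\partial\xi$ along the segment and apply \eqref{eq:contraction}. With $\delta=\xi_1-\xi_2$ this gives
\[
\tfrac{d}{dt}V(\delta)\le -\beta V(\delta)+2\delta^\top P^{-1}(w_1-w_2).
\]
Hence $\|\delta(t)\|_{P^{-1}}\le e^{-\beta(t-t_0)/2}\|\delta(t_0)\|_{P^{-1}}+c\sup_{s\in[t_0,t]}\|w_1(s)-w_2(s)\|$ for some constant $c$. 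Setting $w_1=w_2$ gives exponential convergence of any two solutions to each other, and hence uniqueness of a bounded entire solution: if two bounded entire solutions existed, letting $t_0\to-\infty$ forces $\delta\equiv0$.

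\textbf{Existence.} Fix an initial state $\xi_0$ and let $\xi^{(k)}$ solve the system on $[-k,\infty)$ with $\xi^{(k)}(-k)=\xi_0$. The estimate with $w_1=w_2$ compares $\xi^{(k+1)}$ and $\xi^{(k)}$ at time $-k$. Their distance there is bounded by a uniform constant, because solutions stay bounded: the input $w$ is bounded, so the estimate against a fixed reference trajectory bounds everything. It follows that $\{\xi^{(k)}(t)\}$ is Cauchy uniformly on compact sets, with differences decaying like $e^{-\beta k/2}$. The limit $\xi_v^*$ is a bounded entire solution, and exponential attraction of all other solutions then follows from the incremental estimate.

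\textbf{Almost periodicity, the main step.} For any $\varepsilon>0$, Definition~\ref{def:apf} applied to $v$ gives a relatively dense set of $\varepsilon$-almost periods $\tau$ with $\sup_t\|w(t+\tau)-w(t)\|<\varepsilon$. The shifted function $\xi_v^*(\cdot+\tau)$ is again a bounded entire solution, now driven by $w(\cdot+\tau)$. Apply the incremental estimate to $\xi_v^*(\cdot+\tau)$ and $\xi_v^*$ and let $t_0\to-\infty$; the initial term vanishes by boundedness. This yields $\sup_t\|\xi_v^*(t+\tau)-\xi_v^*(t)\|\le c'\varepsilon$. Thus every $\varepsilon$-almost period of $v$ is a $c'\varepsilon$-almost period of $\xi_v^*$. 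Continuity of $\xi_v^*$ is clear, so $\xi_v^*$ is almost periodic.

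The main obstacle is not this transfer argument but the invariance hypothesis. Every step requires the trajectories, including the approximating ones on $[-k,\infty)$ and the connecting segments, to lie in the region where \eqref{eq:contraction} holds. The lemma's standing assumption that the considered solutions remain in the operating set covers this. I would state explicitly that the argument uses it in this extended form, with convexity of $\mathcal X$ or an invariant convex sublevel set of $V$, rather than trying to prove invariance.
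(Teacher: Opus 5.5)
Your proposal is correct, and its decisive step is exactly the paper's argument: shift $\xi_v^*$ by an almost period of $v$, compare the shifted and unshifted solutions through an incremental input-to-state bound, and let $t_0\to-\infty$.

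Where you differ is in how the prerequisites are obtained. The paper imports existence and uniqueness of the bounded entire solution, exponential convergence, and an input-to-state convergence estimate with a generic $\mathcal{K}_\infty$ gain. It does so by citing Theorem~1, Definitions~1--3 and Property~1 of Pavlov et al., and it uses uniqueness to identify $\xi^*_{v_\theta}(t)=\xi_v^*(t+\theta)$ before applying that estimate. You instead derive the incremental bound directly from the contraction inequality of Lemma~\ref{lem:contraction}, by integrating the Jacobian along segments. This gives an explicit linear gain proportional to $1/\beta$, so almost periods of $v$ transfer quantitatively to $\xi_v^*$. You also construct the entire solution by a pullback limit.

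The paper's route is shorter. Yours is self-contained and makes visible what the localization actually requires. Pavlov et al.'s theorem is formulated with the Jacobian inequality on the whole state space, whereas here it is certified only on $\mathcal{X}\times\mathbb{R}^{n_z}$. The citation therefore silently needs what you state explicitly: convexity of $\mathcal{X}$ (true for the box used in the simulations) and invariance of all trajectories involved, including the pullback ones.

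One detail to tighten in the existence step: avoid appealing to a ``fixed reference trajectory.'' No bounded reference solution is available a priori; for instance, $\xi=0$ is an equilibrium of the unforced closed loop only if $Q(0)=0$. A cleaner route is as follows.
\begin{itemize}
\item By \eqref{eq:nonlin:bound} and convexity, $Q$ is Lipschitz on $\mathcal{X}$ with constant $\|R_Q\|$, so $f$ grows at most affinely in $\xi$ on $\mathcal{X}\times\mathbb{R}^{n_z}$.
\item Since $w$ is bounded, a one-unit-time Gronwall estimate bounds $\|\xi^{(k+1)}(-k)-\xi_0\|$ by a constant $C$ independent of $k$.
\item Run the construction for all $k\in\mathbb{Z}$, including positive start times. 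Summing the geometric tail then bounds $\|\xi_v^*(-k)-\xi_0\|$ by a constant of order $C/(1-e^{-\beta/2})$, uniformly in $k\in\mathbb{Z}$.
\item The same Gronwall bound between consecutive integer times yields boundedness of $\xi_v^*$ on all of $\mathbb{R}$.
\end{itemize}
With that fix, the rest of your argument (uniqueness, exponential attraction, and almost periodicity) goes through as written.
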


\begin{proof}
First, rewrite the closed-loop dynamics \eqref{eq:closeloop:model} as
\begin{equation}\label{eq:xi_dyn}
    \dot \xi = f(\xi,v),
\end{equation}
where $f(\xi,v):=(A_\xi+B_\xi \K)\mathcal F(\xi)+E_\xi v.$
By Lemma~\ref{lem:contraction}, the closed-loop system is exponentially contractive. It is noted that $f$ is continuously differentiable in $\xi$ and continuous in $v$. Under Assumption~\ref{as:exo:non}, the exosignal $v(t)$ exists and is bounded for all $t\in\mathbb R$.
Therefore, by \cite[Theorem~1]{Pavlov2005}, the system \eqref{eq:xi_dyn} is exponentially convergent and input-to-state convergent. 
Furthermore, according to Definitions~1-3 and Property~1 in \cite{Pavlov2005}, there exists a unique solution $\xi_v^*(t)$ defined and bounded for all $t\in\mathbb R$, to which all solutions of \eqref{eq:xi_dyn} converge exponentially.

We next show that the bounded entire solution $\xi_v^*$ is almost periodic.
For any $\theta\in\mathbb R$, define the shifted input $v_\theta(t):=v(t+\theta).$
For the shifted input \(v_\theta\), the time-shifted trajectory \(\xi_v^*(t+\theta)\) is a bounded entire solution of \eqref{eq:xi_dyn}.
By uniqueness of bounded entire solutions, it follows that
\begin{equation}\label{eq:shift_identity}
\xi_{v_\theta}^*(t)=\xi_v^*(t+\theta),\quad \forall t\in\mathbb R.
\end{equation}

In addition, since the closed-loop system \eqref{eq:xi_dyn} is input-to-state convergent, by \cite[Definition~3]{Pavlov2005} there exist a $\mathcal{K}_\infty$-function $\gamma(\cdot)$~\footnote{
A function $\gamma:[0,\infty)\to[0,\infty)$ is said to be of class
$\mathcal K$ if it is continuous, strictly increasing, and $\gamma(0)=0$.
A function $\gamma:[0,\infty)\to[0,\infty)$ is said to be of class
$\mathcal K_\infty$
if it is of class $\mathcal K$ and also unbounded.} and a $\mathcal{KL}$-function $\beta(\cdot,\cdot)$~\footnote{
A function $\beta:[0,\infty)\times[0,\infty)\to[0,\infty)$ is said to be
of class $\mathcal{KL}$ if $\beta(\cdot,t)$ is of class $\mathcal K$
for each fixed $t\ge0$ and $\beta(s,t)$ decreases to $0$ as
$t\to\infty$ for each fixed $s\ge0$.
} such that, for the inputs $v(\cdot)$ and $v_\theta(\cdot)$, the corresponding solutions satisfy, for all $t\ge t_0$,
\begin{equation}\label{eq:isc_pair_theta}
\begin{split}
\|\xi_{v_\theta}^*(t)-\xi_v^*(t)\|
\le\;&
\beta \left(\|\xi_{v_\theta}^*(t_0)-\xi_v^*(t_0)\|,\,t-t_0\right)\\
&+\gamma \Big(\sup_{t_0\le s\le t}\|v(s+\theta)-v(s)\|\Big).
\end{split}
\end{equation}

Since both $\xi_{v_\theta}^*(\cdot)$ and $\xi_v^*(\cdot)$ are bounded entire solutions, for each fixed $\theta$ there exists a constant $M_\theta<\infty$ such that
\[
\|\xi_{v_\theta}^*(t_0)-\xi_v^*(t_0)\|\le M_\theta,\quad \forall\, t_0\in\mathbb R.
\]
Letting $t_0\to-\infty$ in \eqref{eq:isc_pair_theta} yields
\begin{equation}\label{eq:isc_shift_uniform}
\|\xi_{v_\theta}^*(t)-\xi_v^*(t)\|
\le
\gamma\Big(\sup_{s\in\mathbb R}\|v(s+\theta)-v(s)\|\Big),\quad \forall\, t\in\mathbb R,
\end{equation}
where we used the fact that as $t_0\to-\infty$,
\[
\beta\!\left(\|\xi_{v_\theta}^*(t_0)-\xi_v^*(t_0)\|,\,t-t_0\right)
\le
\beta(M_\theta,\,t-t_0)\to 0.
\]

Then, combining \eqref{eq:isc_shift_uniform} with \eqref{eq:shift_identity}, we obtain
\begin{equation}\label{eq:ap_bound_bohr}
\sup_{t\in\mathbb R}\|\xi_v^*(t+\theta)-\xi_v^*(t)\|
\le
\gamma\Big(\sup_{t\in\mathbb R}\|v(t+\theta)-v(t)\|\Big).
\end{equation}

Fix any $\varepsilon>0$. Since $\gamma\in\mathcal K_\infty$, there exists $\delta>0$ such that $\gamma(\delta)<\varepsilon$. Since $v$ is almost periodic under Assumption~\ref{as:exo:non}, by Definition~\ref{def:apf}, for any $\delta>0$ there exists a relatively dense set of $\theta\in\mathbb R$ such that
\[
\sup_{t\in\mathbb R}\|v(t+\theta)-v(t)\|<\delta.
\]
For any such $\theta$, it follows from \eqref{eq:ap_bound_bohr} that
\[
\sup_{t\in\mathbb R}\|\xi_v^*(t+\theta)-\xi_v^*(t)\|
\le \gamma(\delta)<\varepsilon.
\]
Hence, the set of $\varepsilon$-almost periods of $\xi_v^*$ is relatively dense in $\mathbb R$ for every $\varepsilon>0$. Therefore, $\xi_v^*$ is almost periodic, which completes the proof.
\end{proof}

\subsection{Stability and Regulation Analysis}

Under the proposed data-driven regulator, the closed-loop system admits a unique bounded entire solution $\xi_v^*$. Based on this steady-state behavior, we proceed to analyze the corresponding regulation error and its frequency-domain characterization,  which will lead to the solution of Problem~\ref{pro}.

For periodic exosignals, boundedness of the internal model state together with convergence to a periodic steady state implies elimination of the embedded harmonics; see, e.g., \cite[Proposition~3]{Astolfi2015}.
Such arguments, however, rely on the finite spectral structure of periodic signals and do not directly extend to almost periodic signals, whose spectra may be more general.
To address this issue, we introduce the following auxiliary lemma based on Fourier--Bohr analysis, which will be instrumental in characterizing the  regulation error in the almost periodic setting.

\begin{lemma}\label{lem:AP_Lem3}
Let $\eta:\mathbb{R}\to\mathbb{R}$ be almost periodic, and let
\[
\mathcal{M}\{\eta\}:=\lim_{T\to\infty}\frac{1}{T}\int_{0}^{T} \eta(t)\,dt
\]
denote its mean value. Then the following statements hold:
\begin{enumerate}
\item[i)] Consider
\begin{equation}\label{eq:AP_C7}
\dot\zeta=\phi_k\zeta+\bar G_k \eta,\quad 
\phi_k=\begin{bmatrix}0&\sigma_k\\-\sigma_k&0\end{bmatrix}
\end{equation}
where $\bar G_k\in\mathbb{R}^2\setminus\{0\}$ and $\sigma_k>0$ is a given frequency with $k\in\{1,\ldots,s\}$.
If \eqref{eq:AP_C7} admits a bounded entire solution $\zeta^*$, then
\begin{equation}\label{eq:AP_C8}
\hat \eta(\sigma_k)=\lim_{T\to\infty}\frac{1}{T}\int_0^T \eta(t)e^{-i\sigma_k t}dt=0.
\end{equation}

\item[ii)] Consider
\begin{equation}\label{eq:AP_C9}
\dot\zeta=\gamma \eta,\quad \gamma\in\mathbb{R}\setminus\{0\}.
\end{equation}
If \eqref{eq:AP_C9} admits a bounded entire solution $\zeta^*$, then
\begin{equation}\label{eq:AP_C10}
\hat \eta(0)=\mathcal{M}\{\eta\}=0.
\end{equation}

\item[iii)] Suppose $\eta$ is differentiable, $\dot \eta$ is a continuous almost periodic signal, and $|\dot \eta(t)|\le \bar \eta$ for all $t\in\mathbb{R}$.
Assume that the Fourier--Bohr spectrum of $\eta$ is denoted by
$\Lambda_\eta$.
If $\hat\eta(0)=0$ and $\hat\eta(\sigma_k)=0$ for $k=1,\ldots,s$,
and if there exists $\underline{\lambda} > 0$ such that
\[
|\lambda| \ge \underline{\lambda}, \quad
\forall \lambda \in \Lambda_\eta \setminus \{0,\sigma_1,\ldots,\sigma_s\},
\]
then
\begin{equation}\label{eq:AP_L2_bound}
\lim_{T\to\infty}\frac{1}{T}\int_{0}^{T} |\eta(t)|^2dt
\le \frac{\bar \eta^2}{\underline{\lambda}^2}.
\end{equation}
\end{enumerate}
\end{lemma}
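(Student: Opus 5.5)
For i), I will diagonalize the rotation. Set $w(t):=c^\top e^{-\phi_k t}\zeta^*(t)$ with a suitable complex vector $c$. Since $\phi_k$ has eigenvalues $\pm i\sigma_k$, choose $c$ as a left eigenvector of $\phi_k$ for the eigenvalue $i\sigma_k$, i.e. $c^\top\phi_k=i\sigma_k c^\top$, and take the scalar $w(t):=e^{-i\sigma_k t}c^\top\zeta^*(t)$. Differentiating gives
\[
\dot w = e^{-i\sigma_k t}\big(c^\top\phi_k\zeta^*+c^\top\bar G_k\eta - i\sigma_k c^\top\zeta^*\big)=(c^\top\bar G_k)\,e^{-i\sigma_k t}\eta(t).
\]
Integrate over $[0,T]$ and divide by $T$. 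This gives $\frac{1}{T}(w(T)-w(0))=(c^\top\bar G_k)\frac1T\int_0^T\eta(t)e^{-i\sigma_k t}dt$. The left side tends to zero because $\zeta^*$ is bounded. The limit on the right exists by the definition of Fourier--Bohr coefficients (Lemma~\ref{lem:mean}, applied to the almost periodic function $\eta e^{-i\sigma_k t}$). So $(c^\top\bar G_k)\hat\eta(\sigma_k)=0$.

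The one point to check is that $c^\top\bar G_k\neq 0$. A left eigenvector of $\phi_k$ is $c=(1,i)$ up to scaling, so $c^\top\bar G_k=\bar G_{k,1}+i\bar G_{k,2}$. This vanishes only if $\bar G_k=0$, since $\bar G_k$ is real, which is excluded. This is exactly the controllability of $(\phi_k,\bar G_k)$ through the PBH test, so $\hat\eta(\sigma_k)=0$.

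Part ii) is the same argument with $c=1$ and eigenvalue $0$. We have $\zeta^*(T)-\zeta^*(0)=\gamma\int_0^T\eta$. Dividing by $T$ and using boundedness of $\zeta^*$ gives $\gamma\mathcal M\{\eta\}=0$, and since $\gamma\neq0$, $\mathcal M\{\eta\}=0$.

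For iii), I will relate the Fourier--Bohr coefficients of $\dot\eta$ to those of $\eta$ and then use Lemma~\ref{lem:parseval} twice. Integrating by parts,
\[
\frac1T\int_0^T\dot\eta(t)e^{-i\lambda t}dt=\frac{\eta(T)e^{-i\lambda T}-\eta(0)}{T}+i\lambda\frac1T\int_0^T\eta(t)e^{-i\lambda t}dt.
\]
The boundary term vanishes as $T\to\infty$ because $\eta$ is bounded, being almost periodic. Hence $\hat{\dot\eta}(\lambda)=i\lambda\hat\eta(\lambda)$ for every $\lambda$. By hypothesis every $\lambda$ with $\hat\eta(\lambda)\neq0$ lies in $\Lambda_\eta\setminus\{0,\sigma_1,\ldots,\sigma_s\}$, so $|\lambda|\ge\underline\lambda$ there. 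Applying Lemma~\ref{lem:parseval} to $\eta$ and then to $\dot\eta$,
\[
M\{|\eta|^2\}=\sum_{\lambda\in\Lambda_\eta}|\hat\eta(\lambda)|^2\le\frac{1}{\underline\lambda^2}\sum_{\lambda}\lambda^2|\hat\eta(\lambda)|^2=\frac{1}{\underline\lambda^2}\sum_\lambda|\hat{\dot\eta}(\lambda)|^2=\frac{M\{|\dot\eta|^2\}}{\underline\lambda^2}\le\frac{\bar\eta^2}{\underline\lambda^2}.
\]
Here $M\{|\eta|^2\}$ is the limit in \eqref{eq:AP_L2_bound}, so this is the claim.

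The step needing most care is the middle equality $\sum_\lambda|\hat{\dot\eta}(\lambda)|^2=\sum_\lambda\lambda^2|\hat\eta(\lambda)|^2$. Its left side runs over the spectrum of $\dot\eta$. The identity $\hat{\dot\eta}=i\lambda\hat\eta$ shows that this spectrum equals $\Lambda_\eta\setminus\{0\}$, so the sums match term by term. The remaining bookkeeping is counting the conjugate frequencies $\pm\lambda$ in the real-valued case consistently with Lemma~\ref{lem:parseval}, and noting that the required almost periodicity of $\dot\eta$ is assumed in the hypothesis.
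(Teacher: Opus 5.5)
Your proposal is correct and follows the paper's proof essentially step by step. Your left-eigenvector projection $c^\top\zeta^*$ is exactly the paper's complex variable $\zeta_c=\zeta_1^*-i\zeta_2^*$, part ii) is identical, and part iii) uses the same two applications of Parseval's identity together with $\hat{\dot\eta}(\lambda)=i\lambda\hat\eta(\lambda)$; you justify that identity by integration by parts, whereas the paper only asserts it.

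One harmless slip: the left eigenvector of $\phi_k$ for $i\sigma_k$ is $c=(1,-i)$, not $(1,i)$, which belongs to $-i\sigma_k$. With the correct choice, $c^\top\bar G_k=\bar G_{k,1}-i\bar G_{k,2}$, which is still nonzero whenever $\bar G_k\neq 0$, so your conclusion is unaffected.
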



\begin{proof}
\textit{i)} Let $\zeta^*=[\zeta_1^*,\,\zeta_2^*]^\top$ and define the complex variable
\[
\zeta_c:=\zeta_1^*-i\zeta_2^*.
\]

Writing $\bar G_k={\rm col}(\bar G_{k,1},\bar G_{k,2})$, we obtain
\begin{equation}\label{eq:complex_osc}
\dot{\zeta}_c=i\sigma_k\zeta_c+\hat G_k\eta,
\end{equation}
where $\hat G_k:=\bar G_{k,1}-i\bar G_{k,2}.$
Since $\bar G_k\in\mathbb R^2\setminus\{0\}$, we have $\hat G_k\neq0$.
Multiplying both sides of \eqref{eq:complex_osc} by $e^{-i\sigma_k t}$ yields
\[
\frac{d}{dt}\big(e^{-i\sigma_k t}\zeta_c\big)
=
\hat G_k e^{-i\sigma_k t}\eta.
\]
Integrating over $[0,T]$ gives
\[
e^{-i\sigma_k T}\zeta_c(T)-\zeta_c(0)
=
\hat G_k\int_0^T \eta(t)e^{-i\sigma_k t}\,dt.
\]
Dividing both sides by $T\hat G_k$ leads to
\begin{equation}\label{eq:divide_T}
    \frac{1}{T}\int_0^T \eta(t)e^{-i\sigma_k t}\,dt
=
\frac{e^{-i\sigma_k T}\zeta_c(T)-\zeta_c(0)}
{\hat G_k T}.
\end{equation}
Since $\zeta_c(t)$ is bounded on $\mathbb R$ and $|e^{-i\sigma_k T}|=1$, the numerator on the right-hand side of \eqref{eq:divide_T} is bounded. Hence,  the right-hand side of \eqref{eq:divide_T} converges to zero as $T\to\infty$, i.e.,
\[
\lim_{T\to\infty}\frac{1}{T}\int_0^T e^{-i\sigma_k t}\eta(t)dt=0,
\]
which proves \eqref{eq:AP_C8}.

\textit{ii)} Integrating \eqref{eq:AP_C9} over $[0,T]$ yields
\[
\zeta^*(T)=\zeta^*(0)+\gamma\int_0^T \eta(t)\,dt,
\]
which implies
\[
\frac{1}{T}\int_0^T \eta(t)\,dt=\frac{\zeta^*(T)-\zeta^*(0)}{\gamma T}.
\]
Since $\zeta^*(t)$ is bounded on $\mathbb{R}$, the right-hand side converges to zero as $T\to\infty$, and thus $\mathcal{M}\{\eta\}=0$.

\textit{iii)} Let $\Lambda_\eta$ denote the Fourier--Bohr spectrum of $\eta$.
By Parseval's identity in Lemma~\ref{lem:parseval},
\begin{equation}\label{eq:AP_Parseval_eta}
\mathcal M\{|\eta(t)|^2\}
=
\sum_{\lambda\in\Lambda_\eta}|\hat\eta(\lambda)|^2.
\end{equation}
Since $\eta$ is differentiable and $\dot\eta$ is almost periodic, the Fourier--Bohr coefficients satisfy $\hat{\dot\eta}(\lambda)=i\lambda\hat\eta(\lambda),\ \lambda\in\Lambda_\eta .$
Applying Parseval's identity to $\dot\eta$ gives
\begin{equation}\label{eq:AP_Parseval_deta}
\mathcal M\{|\dot\eta(t)|^2\}
=
\sum_{\lambda\in\Lambda_\eta}
\lambda^2|\hat\eta(\lambda)|^2.
\end{equation}

By assumption, $\hat\eta(0)=0$ and $\hat\eta(\sigma_k)=0$ for
$k=1,\ldots,s$. Hence, the only nonzero contributions to
\eqref{eq:AP_Parseval_eta} come from $\Lambda_\eta\setminus\{0,\sigma_1,\ldots,\sigma_s\}.$
By the definition of $\underline\lambda$, for all
$\lambda\in\Lambda_\eta\setminus\{0,\sigma_1,\ldots,\sigma_s\}$,
we have $|\lambda|\ge \underline\lambda$. Therefore,
\begin{align*}
\mathcal M\{|\dot\eta(t)|^2\}
&=
\sum_{\lambda\in\Lambda_\eta\setminus\{0,\sigma_1,\ldots,\sigma_s\}}
\lambda^2|\hat\eta(\lambda)|^2  \\
&\ge
\underline\lambda^2
\sum_{\lambda\in\Lambda_\eta\setminus\{0,\sigma_1,\ldots,\sigma_s\}}
|\hat\eta(\lambda)|^2  \\
&=
\underline\lambda^2\,\mathcal M\{|\eta(t)|^2\}.
\end{align*}
It follows that
\begin{equation}\label{eq:AP_step_ratio}
\mathcal M\{|\eta(t)|^2\}
\le
\frac{\mathcal M\{|\dot\eta(t)|^2\}}{\underline\lambda^2}.
\end{equation}
Finally, using $|\dot\eta(t)|\le\bar\eta$, we obtain
\[
\mathcal M\{|\dot\eta(t)|^2\}
=
\lim_{T\to\infty}\frac{1}{T}\int_0^T|\dot\eta(t)|^2\,dt
\le
\bar\eta^2.
\]
Substituting this into \eqref{eq:AP_step_ratio} yields
\[
\mathcal M\{|\eta(t)|^2\}
\le
\frac{\bar\eta^2}{\underline\lambda^2},
\]
which proves \eqref{eq:AP_L2_bound}.
\end{proof}


Lemma~\ref{lem:AP_Lem3} links the boundedness of internal-model states to the vanishing of selected Fourier--Bohr coefficients and provides a bound on the residual signal energy. Combining this result with Lemmas~\ref{lem:contraction}-\ref{lem:AP_ss}, we obtain the following theorem, which solves Problem~\ref{pro}.

\begin{theorem}\label{thm:main_prob1}
Consider the system \eqref{eq:sys:non} under Assumption~\ref{as:exo:non}.
Suppose that the conditions of Lemma~\ref{lem:contraction} hold, and
let the controller \eqref{eq:ctrl} be implemented with $\K$ designed as in \eqref{eq:K}.
Then, for every closed-loop trajectory remaining in the certified contraction region, the solution is bounded for all $t\ge 0$, and the regulation error $e(t)$ converges exponentially to a bounded almost periodic signal $e^*(t)$.
Moreover, the Fourier--Bohr coefficients of $e^*(t)$ at frequencies $0$ and $\{\sigma_k\}_{k=1}^s$ vanish, namely,
\[
\hat e_j^*(0)=0,\quad \hat e_j^*(\sigma_k)=0,\quad k=1,\ldots,s,\; j=1,\ldots,p.
\]
Furthermore, there exist constants $\bar e_j>0$, $j=1,\ldots,p$, such that the time-averaged energy of each component $e_j^*(t)$ satisfies
\begin{equation}\label{eq:e_s+1}
\lim_{T\to\infty}\frac{1}{T}\int_0^T |e^*_j(t)|^2\,dt
\le \frac{\bar e_j^2}{\underline{\lambda}^2},
\end{equation}
where $\underline{\lambda} > 0$ denotes the smallest magnitude of the frequencies in the Fourier--Bohr spectrum of $e_j^*(t)$ that are not contained in $\{0,\sigma_1,\ldots,\sigma_s\}$.
\end{theorem}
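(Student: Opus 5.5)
The proof chains Lemma~\ref{lem:contraction}, Lemma~\ref{lem:AP_ss}, and Lemma~\ref{lem:AP_Lem3}.

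\textbf{Steady state.} By Lemma~\ref{lem:contraction}, the gain \eqref{eq:K} makes the closed loop \eqref{eq:closeloop:model} exponentially contractive on $\mathcal X\times\mathbb R^{n_z}$. For trajectories remaining in this region, Lemma~\ref{lem:AP_ss} gives a unique bounded entire solution $\xi_v^*=\mathrm{col}(x^*,z^*)$, which is almost periodic and attracts $\xi(t)$ exponentially. Since $\xi_v^*$ is bounded and $\xi(t)\to\xi_v^*(t)$, the trajectory $\xi(t)$ is bounded for $t\ge0$.

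Define $e^*(t):=C_\xi\mathcal F(\xi_v^*(t))+Hv(t)$. On the bounded set containing both trajectories, $\mathcal F$ is $C^1$ and hence Lipschitz, so
\[
\|e(t)-e^*(t)\|\le \|C_\xi\|\,\ell_{\mathcal F}\,\|\xi(t)-\xi_v^*(t)\|,
\]
which gives exponential convergence of $e$ to $e^*$. To see that $e^*$ is almost periodic, note that $(\xi_v^*,v)$ is jointly almost periodic: a common relatively dense set of $\varepsilon$-almost periods follows from the bound \eqref{eq:ap_bound_bohr}, which controls $\xi_v^*$-shifts by $v$-shifts. A uniformly continuous function of a jointly almost periodic signal on a compact range is almost periodic. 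The same argument, using $\dot\xi_v^*=f(\xi_v^*,v)$, shows $\dot\xi_v^*$ is almost periodic. Hence
\[
\dot e^*=C_\xi\,\frac{\partial\mathcal F}{\partial\xi}(\xi_v^*)\,\dot\xi_v^*+HSv
\]
is continuous, almost periodic and bounded. Set $\bar e_j:=\sup_t|\dot e_j^*(t)|$, enlarged slightly if necessary to make it strictly positive.

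\textbf{Vanishing coefficients.} The internal-model component $z^*$ of the entire solution satisfies $\dot z^*=\Phi z^*+\Gamma e^*$. By the block structure \eqref{eq:inter_matrices}, for each output $j$ the $j$-th copy splits into a scalar integrator $\dot\zeta_{j,0}=\gamma e_j^*$ and oscillators $\dot\zeta_{j,k}=\phi_k\zeta_{j,k}+\bar G_k e_j^*$, $k=1,\ldots,s$. Each of these is a bounded entire solution, since $z^*$ is bounded on $\mathbb R$. Lemma~\ref{lem:AP_Lem3} ii) then gives $\hat e_j^*(0)=0$, and i) gives $\hat e_j^*(\sigma_k)=0$. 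The lemma is stated with averaging from $0$, but Lemma~\ref{lem:mean} makes the coefficient independent of the starting point. Because $e_j^*$ is real, $\hat e_j^*(-\sigma_k)=\overline{\hat e_j^*(\sigma_k)}=0$ as well. I would make this explicit, since the spectrum in Lemma~\ref{lem:AP_Lem3} iii) includes negative frequencies.

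\textbf{Energy bound.} Apply Lemma~\ref{lem:AP_Lem3} iii) with $\eta=e_j^*$, $\bar\eta=\bar e_j$, and $\underline\lambda$ as defined in the theorem. This yields \eqref{eq:e_s+1}. If the residual spectrum is empty, the left side is zero and the bound is trivial.

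\textbf{Main obstacle.} The delicate step is verifying the hypotheses of iii): that $\dot e^*$ is almost periodic, and that $\underline\lambda>0$ exists. The first needs the joint almost periodicity of $(\xi_v^*,v)$, which I would obtain via the common-almost-period argument above. The second is taken as the theorem's standing definition; it holds whenever the residual spectrum has no accumulation at the removed frequencies, and I would state this explicitly as an assumption rather than prove it.
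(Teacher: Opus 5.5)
Your proposal is correct and follows essentially the same route as the paper. Lemmas~\ref{lem:contraction} and~\ref{lem:AP_ss} give the almost periodic steady state, Lemma~\ref{lem:AP_Lem3} i)--ii) applied to the bounded integrator and oscillator blocks of $z^*$ removes the coefficients at $0$ and $\sigma_k$, and Lemma~\ref{lem:AP_Lem3} iii) with $\bar e_j=\sup_t|\dot e_j^*(t)|$ gives the energy bound. Your extra steps make explicit what the paper only asserts, without changing the argument: the common almost periods of $(\xi_v^*,v)$ obtained from the incremental bound in the proof of Lemma~\ref{lem:AP_ss}, the Lipschitz estimate for $e-e^*$, and treating $\underline\lambda>0$ as a standing hypothesis.
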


\begin{proof}
By Lemma~\ref{lem:contraction} and Lemma~\ref{lem:AP_ss}, the closed-loop system \eqref{eq:closeloop:model} admits a unique bounded entire solution $\xi^*(t)={\rm col}(x^*(t),z^*(t))$ defined for all $t\in\R$, which is almost periodic and to which all solutions converge exponentially.

Since $\xi^*$ and $v$ are almost periodic and the nonlinear function $F(x)$ is continuous, it follows from \eqref{eq:sys:non:e} that $e^*$ is almost periodic. 
Hence, $e(t)$ converges exponentially to the almost periodic signal $e^*(t)$.

Next, consider the steady-state internal model
\[
\dot z^*=\Phi z^*+\Gamma e^*.
\]
By the structure of $(\Phi,\Gamma)$, each component $e_j^*$, $j=1,\ldots,p$, enters an internal model consisting of one integrator and $s$ oscillatory subsystems with frequencies $\{\sigma_k\}_{k=1}^s$. 
Since $z^*(t)$ is bounded on $\R$, it follows from Lemma~\ref{lem:AP_Lem3} i) and ii) that
\[
\hat e_j^*(0)=0,\quad \hat e_j^*(\sigma_k)=0,\quad k=1,\ldots,s,
\]
for each $j=1,\ldots,p$.

Moreover, differentiating \eqref{eq:sys:non:e} along $\xi^*$ gives
\begin{equation}
    \dot e^*=C\frac{\partial F}{\partial x}(x^*)\dot x^*+HSv.
\end{equation}
Since $x^*$ and $v$ are almost periodic, and $F$ and $\frac{\partial F}{\partial x}$ are continuous on the bounded trajectory, it follows that both $\dot x^*$ and $\dot e^*$ are almost periodic. Moreover, boundedness of $x^*$ and $v$ implies that $\dot e^*$ is bounded on $\mathbb{R}$.
Hence, for each $j=1,\ldots,p$, there exists $\bar e_j>0$ such that $|\dot e_j^*(t)|\le \bar e_j,\ \forall t\in\R.$
Therefore, applying Lemma~\ref{lem:AP_Lem3} iii) to each component $e_j^*(t)$ yields \eqref{eq:e_s+1}.
This completes the proof.
\end{proof}

\begin{remark}
We emphasize that the regulation error does not converge to zero, as the steady-state response may contain frequency components that are not captured by the internal model.
Specifically, the proposed approach ensures that the Fourier--Bohr coefficients of the steady-state error at the embedded frequencies vanish. The remaining
components, corresponding to frequencies outside the embedded set, constitute a residual almost periodic error. The bound in \eqref{eq:e_s+1} quantifies the energy of this residual in terms of the
smallest magnitude of the unmodeled frequencies.
\end{remark}

\begin{remark}
Several data-driven output regulation methods have been reported in the literature. Compared with these existing results, the proposed approach exhibits the following distinguishing features.
\begin{itemize}
    \item [i)]  \emph{A broader class of exogenous signals.} The results in \cite{hu2024output} focus on periodic exosignals, which require the disturbance or reference signals to satisfy rational frequency relationships. In contrast, the proposed framework accommodates almost periodic exosignals, allowing for more general signal compositions without imposing periodicity assumptions.

 \item [ii)]  \emph{Nonlinear system setting.} 
  Most existing data-driven output regulation methods \cite{Zhu2024,Jiao2021,lin2024ddimp} are developed for linear systems, whereas our proposed approach directly addresses a class of nonlinear systems without relying on local linearization, which allows for a wider range of application scenarios.

 \item [iii)]  \emph{Regulation objective and solution design.} The method in \cite{Liu2025regulation} studies the $k$th-order nonlinear output regulation problem by constructing a $k$-fold internal model, which achieves approximate regulation characterized by higher-order error terms. In contrast, the proposed approach addresses the original nonlinear system through selected-frequency internal models and Fourier--Bohr analysis, leading to vanishing embedded-frequency coefficients and a residual energy bound.
\end{itemize}
\end{remark}

\section{Simulation}\label{sec:simu}

In this section, the proposed data-driven output regulation framework is validated through both numerical and physics-based simulations on a quadrotor UAV subject to oscillatory disturbances. 
The numerical experiments are conducted on the nominal system model, while the physics-based simulations are implemented in the \texttt{gym-pybullet-drones} environment~\cite{Panerati2021}, which provides a realistic platform for quadrotor dynamics.

\subsection{Simulation Setup}
We consider a quadrotor UAV carrying a cable-suspended payload and focus on the hovering regulation problem in the presence of payload-induced oscillatory disturbances. 
Such scenarios arise in aerial transportation and manipulation tasks, where payload swing induces persistent oscillatory disturbances whose amplitudes and phases depend on unknown initial conditions, while their dominant frequencies are determined by physical parameters such as the cable length.
The payload swing generates oscillatory disturbances that are transmitted to the quadrotor through the cable constraints, affecting both the vertical force and the attitude dynamics.

Accordingly, the quadrotor dynamics are described by 
\begin{subequations}
    \begin{align}\label{eq:quad_dyn}
       & \dot p_z  = v_z \\
       & m\dot v_z = -mg + \sum_{i=1}^{4}f_i + d_f\\
       &\dot \eta =T(\phi,\theta)\omega\\
       & J\dot \omega + \omega\times(J\omega) = M + d_\tau
    \end{align}
\end{subequations}
where $p_z\in\mathbb{R}$  denotes the vertical regulation error with respect to the desired hovering height,
$v_z\in\mathbb{R}$ is the vertical velocity, $m$ is the mass, $g$ is the gravitational acceleration, and $f_i$, $i=1,\ldots,4$, are the thrust forces generated by the propellers. Moreover, $\omega=\mathrm{col}(p,q,r)\in\mathbb{R}^3$ is the body-frame angular velocity, $M\in\mathbb{R}^3$ is the total control moment, $\eta=\mathrm{col}(\phi,\theta,\psi)$ denotes the Euler angles, and $J=\mathrm{blkdiag}(J_x,J_y,J_z)$ is the inertia matrix. The terms $d_f$ and $d_\tau$ denote external disturbances acting on the translational and rotational dynamics.

Next, we introduce the following exosystem:
\begin{subequations}
    \begin{align}
           \dot v(t) &= \begin{bmatrix}
        0 &0& 0&0& 0\\
        0 & 0&  \sigma_z&0& 0\\
        0 & -\sigma_z&  0&0& 0\\
         0 &0& 0&0& \sigma_\omega\\
          0 &0& 0& -\sigma_\omega& 0
    \end{bmatrix}v(t)=:Sv(t)\\
        d(t)&=  \begin{bmatrix}
        0& 0& 0 & 0& 0\\
        -g & 1 & 0& 0& 0\\
         0& 0& 0& 1& 0\\
          0& 0& 0& 1& 0\\
           0& 0& 0& 0& 0\\
    \end{bmatrix}v(t) =: Ev(t)
    \end{align}
\end{subequations}
where $\sigma_z$ and $\sigma_\omega$ denote the dominant disturbance frequencies associated with the vertical force channel and the angular-rate dynamics, respectively.

Let $x=\mathrm{col}(p_z,v_z,\eta,\omega)\in\mathbb{R}^8$
be the system state and let $u=\mathrm{col}(u_1,u_2,u_3,u_4)$
be the control input, where $u_1=\sum_{i=1}^4 f_i$ is the collective thrust. The remaining inputs are associated with the rotational dynamics and are mapped to the body-frame $M=\operatorname{col}(M_1,M_2,M_3)$ by
\[
\begin{bmatrix}
M_1\\ M_2\\ M_3
\end{bmatrix}
=
\begin{bmatrix}
c_1 & 0 & 0\\
0 & c_2 & 0\\
0 & 0 & c_3
\end{bmatrix}
\begin{bmatrix}
u_2\\ u_3\\ u_4
\end{bmatrix},
\]
where $c_1,c_2,c_3$ are unknown control effectiveness coefficients.
In particular, $u_2,u_3,u_4$ are treated as virtual control variables associated with the rotational channels.

Under the above modeling setup, we select the known basis function as $F(x)=\mathrm{col}\bigl(x,\,Q(x)\bigr)$,
where
\(Q(x)=\mathrm{col}(
\sin x_3 \tan x_4 x_7,
\cos x_3 \tan x_4\, x_8,
(\cos x_3 - 1) x_7,
\sin x_3 x_8,
\frac{\sin x_3}{\cos x_4} x_7,
(\frac{\cos x_3}{\cos x_4} -1)x_8,
x_7 x_8,
x_8 x_6,
x_6 x_7).\)
Here, the operating set is chosen such that $|x_4|\leq \bar x_4<\pi/2$, 
which ensures that $\tan x_4$ and $1/\cos x_4$ are well defined.
With this choice, the system dynamics can be expressed in the form of \eqref{eq:sys:non}, where the matrices $A=
\begin{bmatrix}
{\bar A} & {\hat A}
\end{bmatrix}$, $B$, and $C=
\begin{bmatrix}
{\bar C} & {\hat C}
\end{bmatrix}$ are specified as follows
\begin{subequations}
    \begin{align}
\bar A & = \begin{bmatrix}
    0 & 1 & 0 & 0 & 0 & 0 & 0 & 0\\
    0 & 0 & 0 & 0 & 0 & 0 & 0 & 0\\
    0 & 0 & 0 & 0 & 0 & 1 & 0 & 0 \\
    0 & 0 & 0 & 0 & 0 & 0 & 1 & 0\\
    0 & 0 & 0 & 0 & 0 & 0 & 0 & 1\\
    0 & 0 & 0 & 0 & 0 & 0 & 0 & 0\\
    0 & 0 & 0 & 0 & 0 & 0 & 0 & 0\\
    0 & 0 & 0 & 0 & 0 & 0 & 0 & 0 
\end{bmatrix}\\
\hat A & =\begin{bmatrix}
    0 & 0 & 0 & 0 & 0 & 0 & 0 & 0 & 0\\
    0 & 0 & 0 & 0 & 0 & 0 & 0 & 0 & 0\\
    1 & 1 & 0 & 0 & 0 & 0 & 0 & 0 & 0\\
    0 & 0 & 1 & 1 & 0 & 0 & 0 & 0 & 0\\
    0 & 0 & 0 & 0 & 1 & 1 & 0 & 0 & 0\\
    0 & 0 & 0 & 0 & 0 & 0 & \frac{J_y-J_z}{J_x} & 0 & 0\\
    0 & 0 & 0 & 0 & 0 & 0 &  0 & \frac{J_z-J_x}{J_y} & 0\\
    0 & 0 & 0 & 0 & 0 & 0 &  0 & 0 & \frac{J_x-J_y}{J_z}
\end{bmatrix}\\
        B&= \begin{bmatrix}
                 0& 0& 0& 0 \\
                 \frac{1}{m} & 0& 0& 0\\
                 0& 0& 0& 0 \\
                 0& 0& 0& 0 \\
                 0& 0& 0& 0 \\
                0  & \frac{c_1}{J_x}& 0& 0\\
                0  & 0& \frac{c_2}{J_y}& 0\\
                  0  & 0& 0& \frac{c_3}{J_z}
        \end{bmatrix}\\      
      \bar C &= \begin{bmatrix}
          1 & 0 & 0 & 0 & 0 & 0 & 0 & 0\\
          0 & 0 & 0 & 1 & 0 & 0 & 0 & 0
      \end{bmatrix},\\
      \hat C & =0_{2\times 9}.
      \end{align}
\end{subequations}
We emphasize that the matrices $A$, $B$, and $C$ are displayed only to specify the simulation model; the proposed controller is synthesized directly from data and does not require their numerical values.

In this setting, the controlled output is chosen as $ y=\mathrm{col}(p_z,\theta)$. Accordingly, the regulation error is given by $e=y.$
The control objective is to design a feedback controller $u$ such that the quadrotor maintains the hovering equilibrium by driving the altitude deviation $p_z$ and the pitch angle $\theta$ to a small neighborhood of the origin despite the presence of payload-induced almost periodic disturbances.

\subsection{Numerical Experiments}\label{sec:num}

In this subsection, numerical simulations are conducted in Python using CVXPY with the MOSEK solver on a MacBook Pro equipped with an Apple M4-Pro processor to validate the proposed data-driven output regulation framework.

Let $m=m_q+m_p$, where $m_q$ and $m_p$ denote the quadrotor and payload masses, respectively.
Specifically, the physical parameters are selected as
$m_q = 1.2~\mathrm{kg}$, $m_p = 0.8~\mathrm{kg}$, 
$g = 9.8~\mathrm{N/kg}$, $J = \mathrm{diag}\{0.01668,\,0.01772,\,0.02966\}$, $\sigma_z=\pi/5$ $\sigma_\omega=1$, and $c_1 = c_2 = c_3 = 1$. 
To implement the proposed controller, the internal model is constructed according to the exosystem structure in Section~\ref{sec:result}. 
We set $\gamma = 1$ and $\bar{G}_1 = \bar{G}_2 = \mathrm{col}(0,1)$ in \eqref{eq:inter_matrices}. 
Since the regulated output is two-dimensional, a $p$-copy internal model with $p=2$ is adopted embedded in the controller. 
If we set $\mathcal X =
\left\{
x\in\mathbb R^8:
|x_i|\le \bar x_i,\ i=1,\ldots,8
\right\},$ where $\bar x =
\begin{bmatrix}
0.11 & 0.02 & 0.22 & 0.18 & 0.13 & 0.08 & 0.06 & 0.06
\end{bmatrix}^{\top},$
then \eqref{eq:nonlin:bound} is satisfied with $R_Q=0.5 I_8$.


For the data-driven controller design, noisy input–state trajectory data are collected using a sampling interval $t_{\ell+1} - t_\ell = 0.01~\mathrm{s}$, resulting in $T = 100$ samples. 
The control inputs are drawn uniformly from $[-1.5,\,1.5]$, while the exosignal and additive data noise are sampled from $[-0.001,\,0.001]$. 
The noise bound in Assumption~\ref{as:D_bound} is specified by selecting $\Theta  = 0.01 I$, which is consistent with the magnitude of the generated data noise. 
Based on the collected data, the matrices $U_-$, $\Xi_-$, and $\Xi_+$ are constructed accordingly.
In addition, the matrix $M_-$ is constructed as follows
\[
M_- = \begin{bmatrix}
1 & 1 & \cdots & 1 \\
\cos(\sigma_z t_0) & \cos(\sigma_z t_1) & \cdots & \cos(\sigma_z t_{T-1}) \\
\sin(\sigma_z t_0) & \sin(\sigma_z t_1) & \cdots & \sin(\sigma_z t_{T-1}) \\
\cos(\sigma_\omega t_0) & \cos(\sigma_\omega t_1) & \cdots & \cos(\sigma_\omega t_{T-1}) \\
\sin(\sigma_\omega t_0) & \sin(\sigma_\omega t_1) & \cdots & \sin(\sigma_\omega t_{T-1})
\end{bmatrix},
\]
where $\{t_k\}_{k=0}^{T-1}$ denotes the sampling instants.

Then, the data-dependent SDP \eqref{eq:sdp} is solved to obtain the feedback gain matrix $\K$. 
The corresponding simulation result is depicted in Fig.~\ref{fig:simu}. 
Both regulation error components converge rapidly to a small neighborhood of the origin and exhibit small residual oscillations in steady state, which is consistent with the theoretical regulation analysis developed in Section~\ref{sec:result}. 
These results confirm the effectiveness of the proposed data-driven internal model-based controller for output regulation under almost periodic disturbances.

\begin{figure}[!htb]
	\centering
	\includegraphics[width=9.2cm]{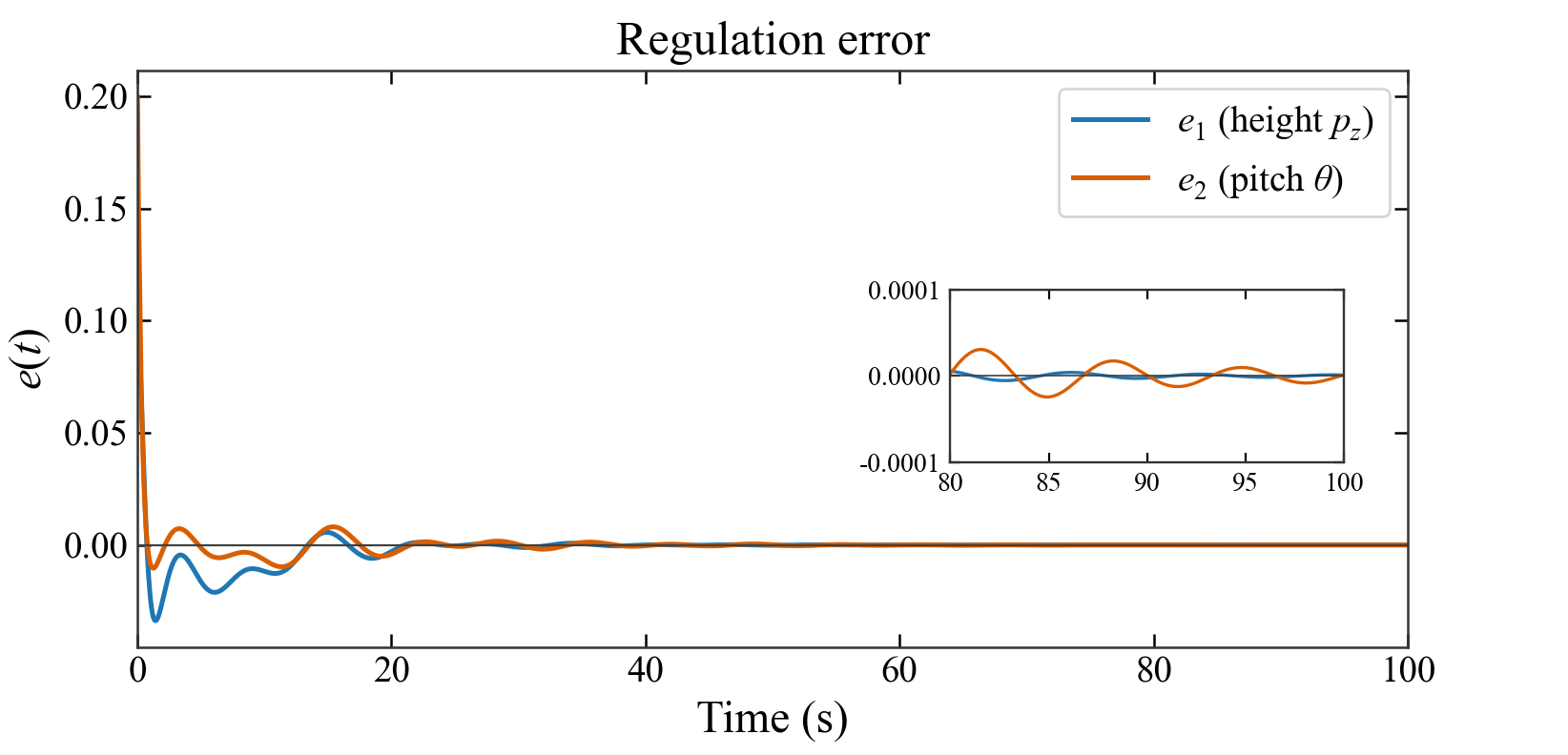}\\
	\caption{Time evolution of the regulation  error in the numerical simulation under the proposed data-driven internal model-based controller.}\label{fig:simu}
\end{figure}

\subsection{Physics-Based Experiments}
To further validate the proposed approach in a more realistic setting, physics-based experiments are conducted in the \texttt{gym-pybullet-drones} environment. 
Unlike the numerical simulations where the disturbances are explicitly generated from the exosystem, the oscillatory disturbances in this setting arise implicitly from the coupled cable--payload dynamics. 
This allows us to evaluate the effectiveness of the proposed controller under physically generated oscillatory disturbances.
The simulation environment is shown in Fig.~\ref{fig:envi}. 
A quadrotor UAV is connected to a payload via a rigid cable, forming a suspended-load system.
The motion is described with respect to an inertial coordinate frame, where the $z$-axis denotes the vertical direction and the $x$-$y$ plane corresponds to the horizontal plane.

\begin{figure}[!htb]
	\centering
	\includegraphics[width=8cm]{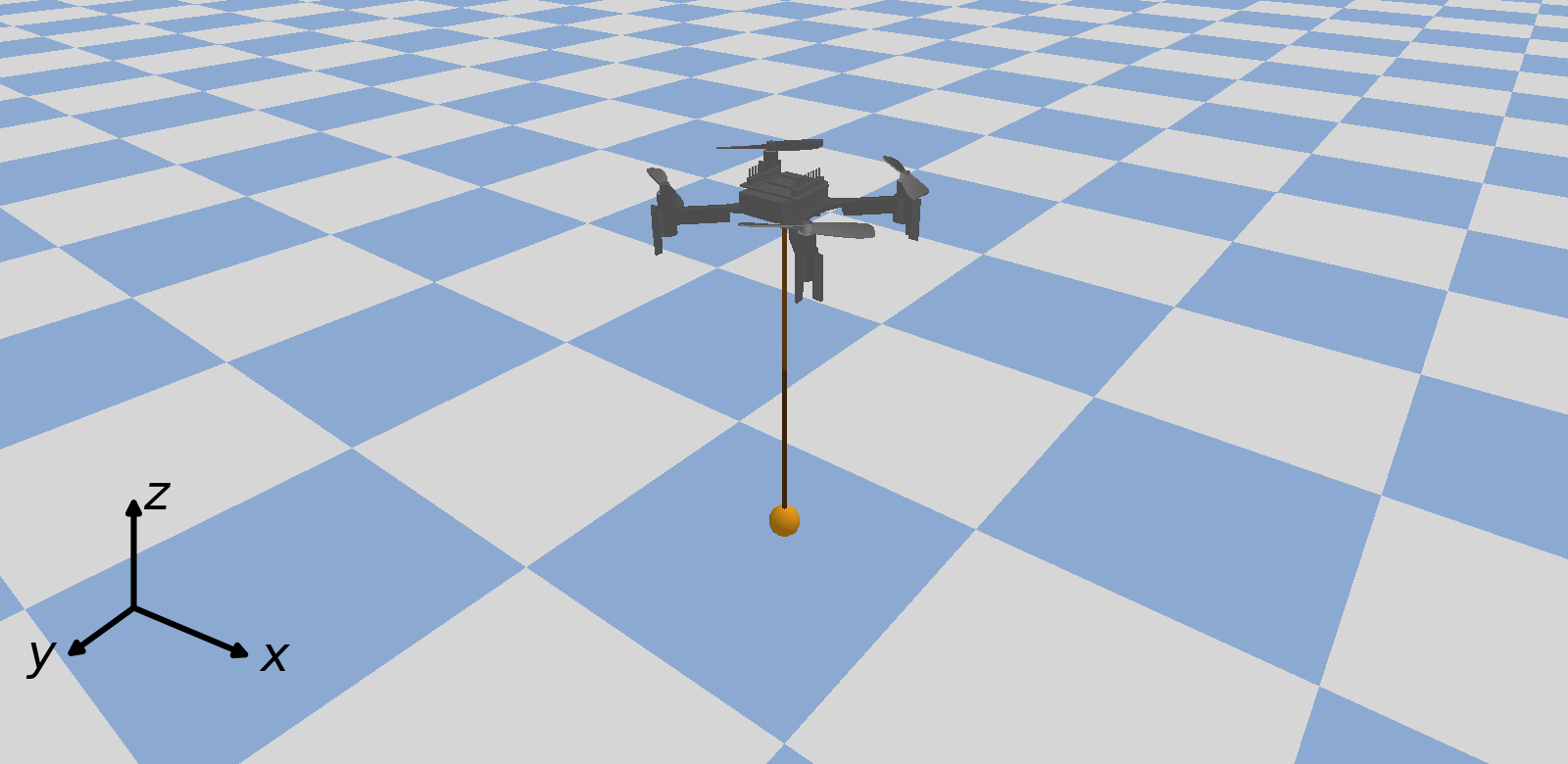}\\
	\caption{Simulation environment of the quadrotor UAV with a cable-suspended payload.}\label{fig:envi}
	\centering
\end{figure}

    \subsubsection{Implementation Details}
The simulations are carried out in the \texttt{gym-pybullet-drones} environment based on the PyBullet rigid-body physics engine. 
The quadrotor, cable, and payload are simulated as a coupled rigid-body system. 
Specifically, the cable is approximated by a $5$-link rigid-body chain connected through point-to-point joints, with its upper end attached to the quadrotor and its lower end attached to the payload. 
The payload is initialized with an angular offset from the vertical direction ($15^\circ$ in the simulations), which induces pendulum-like oscillations. 
No artificial damping is introduced into the cable--payload system, so that the oscillatory motion persists throughout the experiment. 
As a result, the cable tension generates oscillatory disturbances affecting both the translational and rotational dynamics.

Within this environment, the proposed controller is implemented as an outer-loop regulation module acting on the translational and rotational dynamics, while the built-in low-level attitude stabilization mechanism is retained as an inner loop. 
Accordingly, the controller generates collective thrust and attitude-related reference commands for hovering regulation.

The physical parameters, including $m$, $g$, $J$, and $c_1,c_2,c_3$, are selected to be the same as those used in Sec.~\ref{sec:num}. 
In this setting, the dominant oscillation frequency associated with the payload motion is determined by the cable length and incorporated into the controller design, so that the embedded frequencies reflect the dominant disturbance characteristics of the suspended-load system~\cite{Guo2020}.
Specifically, the cable length is set to $L=2.0~\mathrm{m}$, which determines the payload oscillation frequency $\sigma_z=\sqrt{g/L}$. 
In addition, the payload motion induces oscillatory torque disturbances in the rotational dynamics. 
To account for these effects, an auxiliary frequency $\sigma_\omega=1~\mathrm{rad/s}$ is introduced to capture representative oscillatory effects in the rotational channels.
It is worth emphasizing that, except for the disturbance frequencies used in the controller design, all the above parameters are used solely for simulation and data generation and they are not required in the data-driven controller synthesis.

The internal model construction and the data-driven controller synthesis settings are selected the same as those used in Sec.~\ref{sec:num}. 
In particular, the same data collection procedure and SDP-based controller design method are adopted in the physics-based experiments. 
The resulting feedback gain matrix $\K$ is then implemented in the outer-loop controller for closed-loop simulations.

    \subsubsection{Results}
We first evaluate the closed-loop performance of the proposed controller in the physics-based environment.
Fig.~\ref{fig:error} shows the time evolution of the regulation error $e(t)$ under the proposed data-driven controller. 
It can be observed that both the altitude deviation $p_z$ as well as the pitch angle $\theta$ converge rapidly to a small neighborhood of the origin and remain bounded with small oscillations, demonstrating effective regulation of the dominant oscillatory disturbances.
Furthermore, Fig.~\ref{fig:control} depicts the corresponding control inputs. 
The collective thrust quickly settles around the hover value, while the control torques remain bounded and exhibit oscillatory patterns consistent with the disturbance characteristics. 
This indicates that effective disturbance rejection is achieved without excessive control effort.

\begin{figure}[H]
	\centering
    \includegraphics[width=8.8cm]{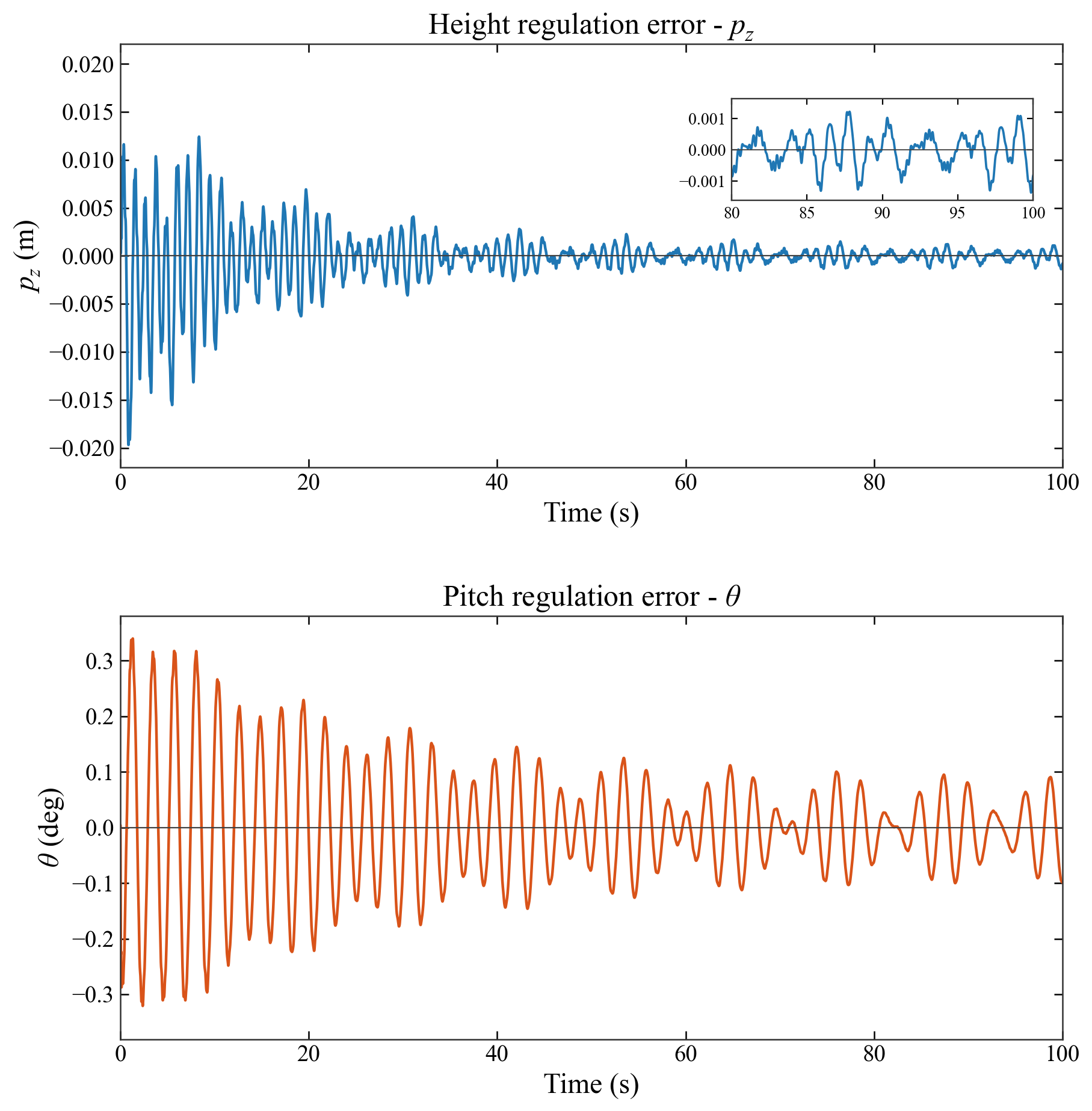}\\
	\caption{Time evolution of the regulation  error in the physics-based experiments under the proposed data-driven internal model-based controller.}\label{fig:error}
\end{figure}

\begin{figure}[H]
	\centering
	\includegraphics[width=8.8cm]{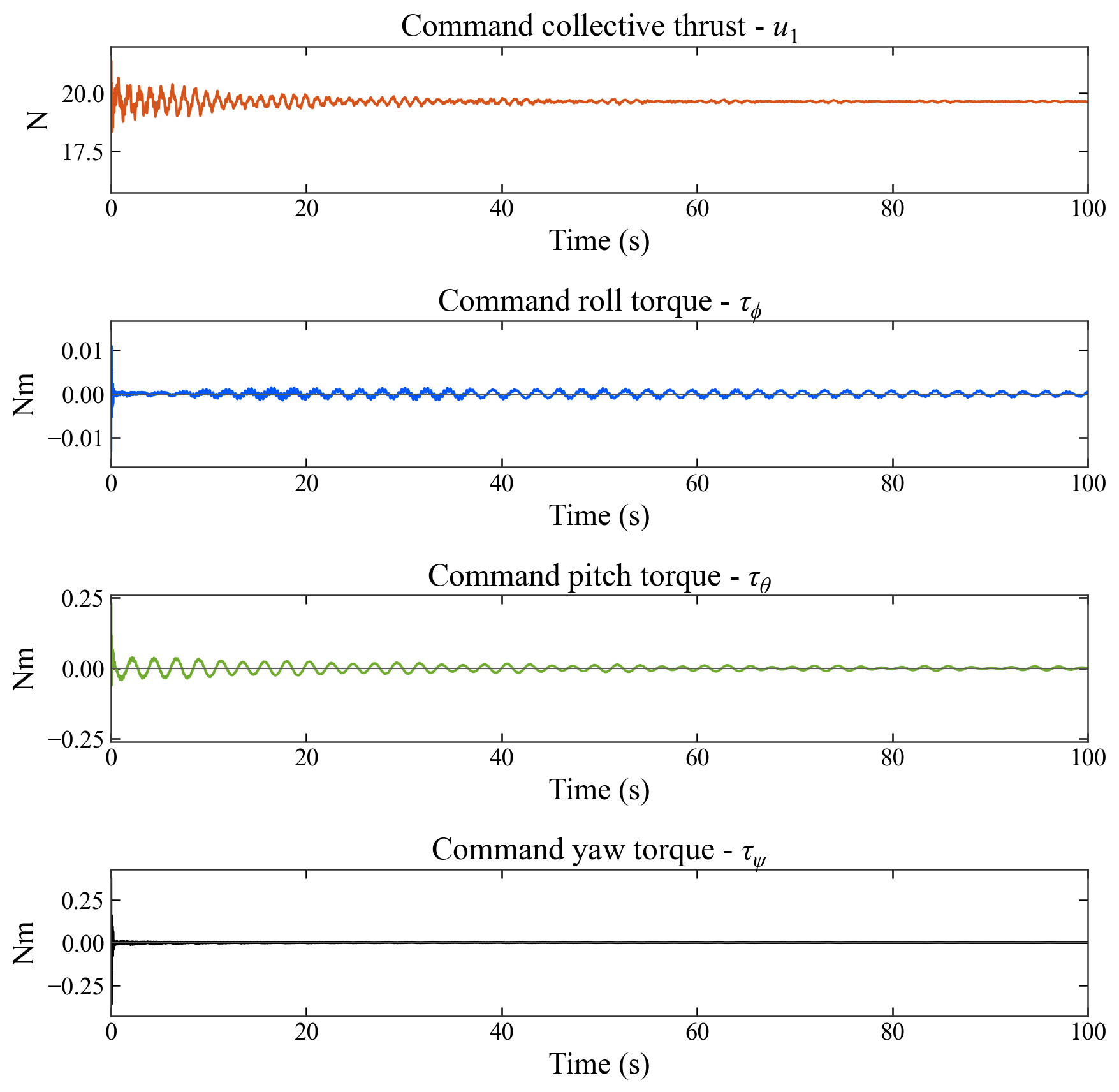}\\
	\caption{Control inputs under the proposed data-driven internal model-based controller.}\label{fig:control}
\end{figure}


\begin{figure}[tb]
	\centering
	\includegraphics[width=9.3cm]{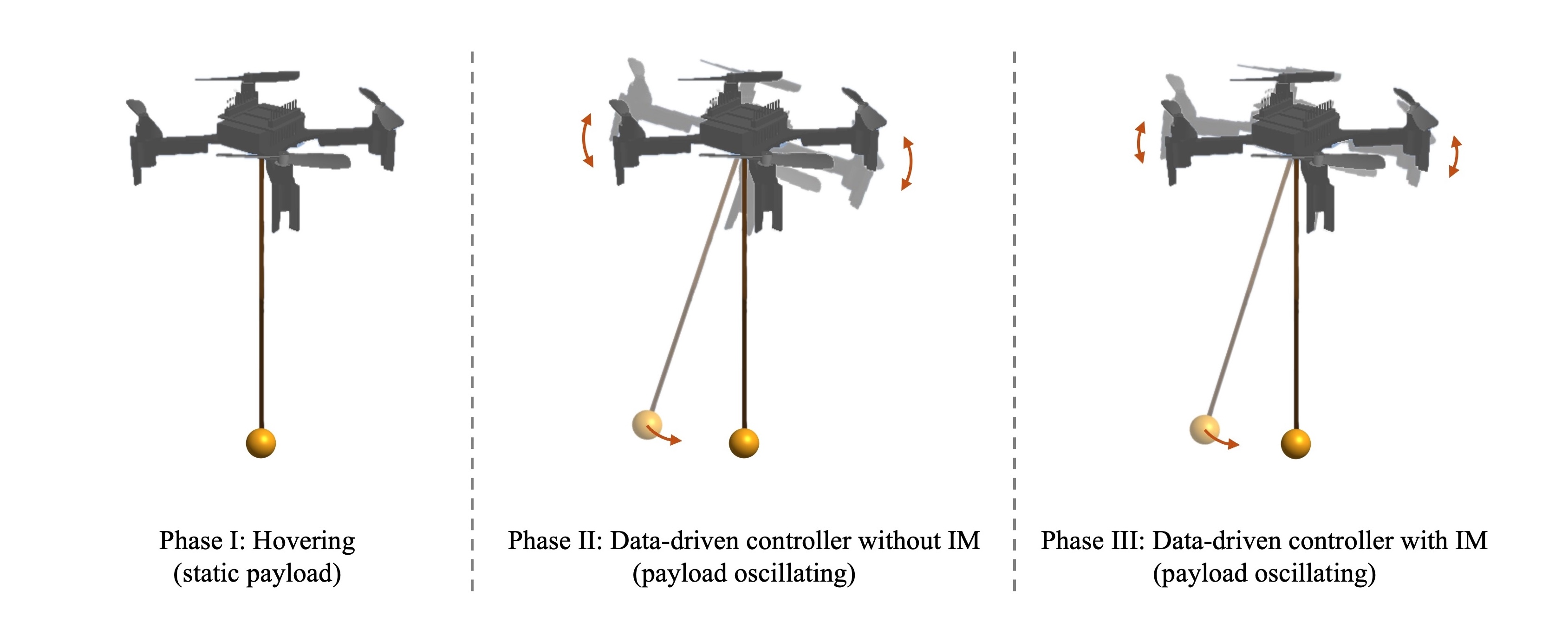}\\
	\caption{Illustration of the three-phase staged switching experiment with a cable-suspended payload.}\label{fig:phase}
\end{figure}

\begin{figure}[!htb]
	\centering
	\includegraphics[width=8.8cm]{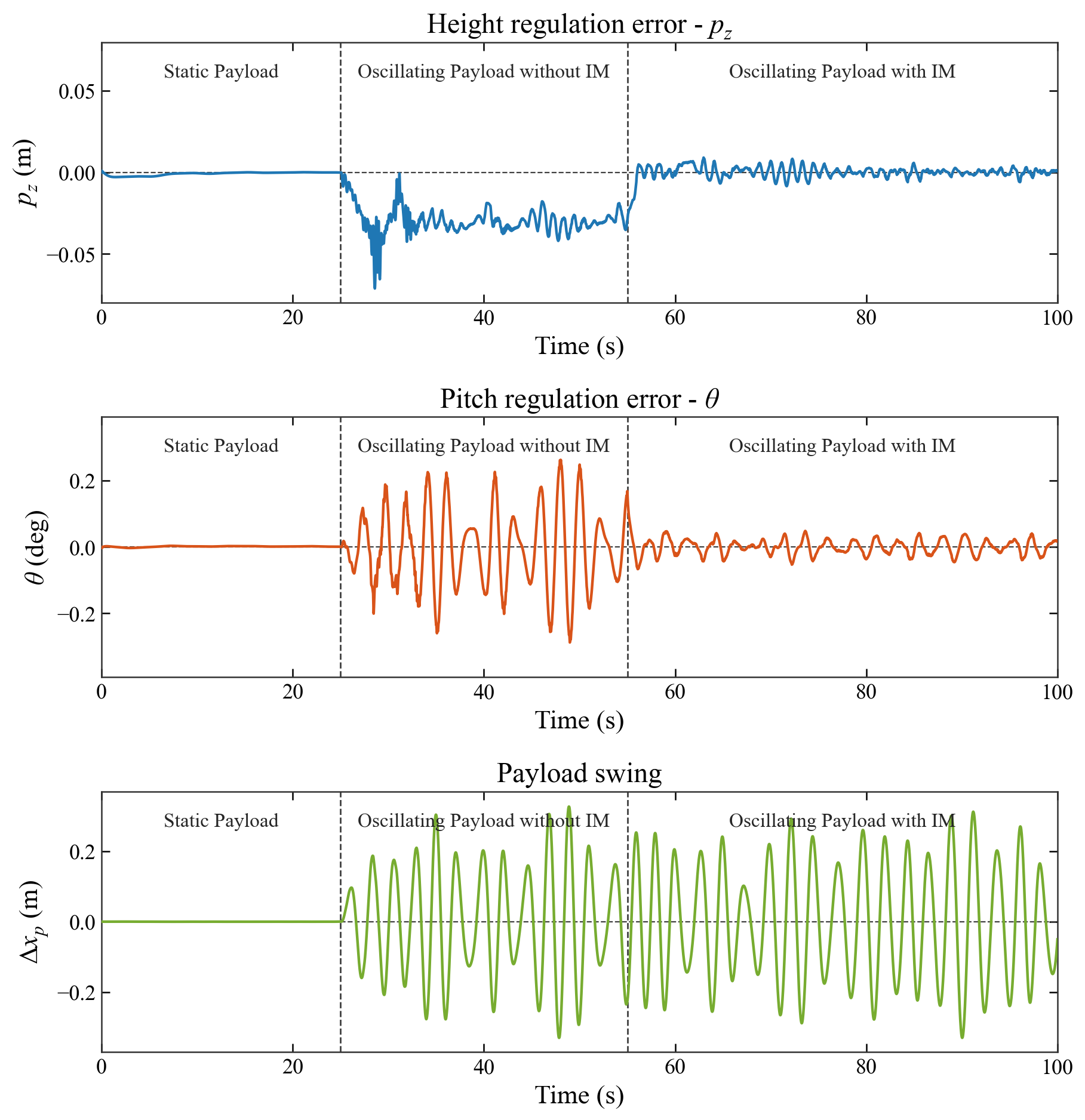}\\
	\caption{Regulation errors and payload swing under different payload conditions and staged switching between data-driven controllers without and with the internal model (IM).}\label{fig:compare}
\end{figure}


To demonstrate the effectiveness of the proposed data-driven controller in the presence of payload-induced disturbances, a staged switching experiment is conducted, as illustrated in Fig.~\ref{fig:phase}. During the flight, the payload is initially kept static and is then excited by an external horizontal force to induce persistent oscillations.
The payload swing is quantified by the relative horizontal displacement $\Delta x_p = x_p - x_q$
where $x_p$ and $x_q$ denote the positions of the payload and the quadrotor along the $x$-axis, respectively. 
The experiment consists of three phases: \emph{Phase I}, where the payload is static and the quadrotor maintains hovering; \emph{Phase II}, where the payload oscillates and a data-driven controller without the internal model is applied; and \emph{Phase III}, where the payload continues to oscillate while the proposed data-driven internal model-based controller is activated.
In particular, the controller without the internal model is designed by solving the corresponding data-driven SDP without the internal-model dynamics, resulting in a static feedback law of the form $u = \mathcal{K}_x \mathcal{F}(x)$.
The corresponding regulation errors and the payload swing are shown in Fig.~\ref{fig:compare}. During Phase I, the system maintains satisfactory hovering performance, with all signals remaining close to zero. When the payload starts to oscillate in Phase II, pronounced oscillations appear in both the height and pitch regulation errors, indicating that the controller without the internal model has limited capability in compensating for the oscillatory disturbance. In contrast, after switching to the proposed controller in Phase III, the oscillation amplitudes of both regulation errors are substantially reduced, while the payload swing remains at a comparable level. This confirms that the performance improvement is not caused by a reduction of the payload disturbance itself, but rather by the improved disturbance rejection capability introduced by the proposed controller.
Overall, the results demonstrate that the proposed data-driven internal model-based controller effectively attenuates payload-induced oscillatory disturbances and achieves improved regulation performance.

\section{Conclusion}\label{sec:conclusion}

This paper developed a direct data-driven method for frequency-selective output regulation of nonlinear systems subject to almost periodic exosignals. By embedding the prescribed exosystem frequencies in a dynamic internal model and by representing the unknown augmented dynamics through offline input--state data, the proposed approach avoids both explicit model identification and the solution of nonlinear regulator equations. A robust SDP was derived to synthesize a feedback gain from noisy data. The resulting controller certifies exponential contraction of the augmented closed-loop system on a prescribed operating set. This contraction property yields a unique attracting bounded entire solution; moreover, because the exosignal is almost periodic, the attracting steady state is also almost periodic.

The regulation result was established in the Fourier--Bohr domain. The boundedness of the internal model forces the Fourier--Bohr coefficients of the steady-state regulation error at the embedded frequencies to vanish. The remaining spectral components, which cannot in general be removed without embedding additional frequencies or solving the full nonlinear regulation problem, were bounded through a Parseval-type energy estimate. Numerical and physics-based quadrotor simulations demonstrated that the internal-model controller substantially attenuates payload-induced oscillations compared with a data-driven controller without an internal model.

Several extensions are important for future work. First, the present method requires state measurements and a known nonlinear dictionary; output-feedback versions with data-driven observers would make the approach more broadly applicable. Second, the contraction certificate is local to the prescribed operating set, and systematic enlargement of this set is an open problem. Third, adaptive or data-driven estimation of unknown dominant exosystem frequencies would remove the need for a prescribed frequency set. Finally, experimental validation on hardware and discrete-time formulations are natural next steps toward deployment.

\bibliographystyle{IEEEtran}
\bibliography{ddOutReg}

\end{document}